\DeclareMathOperator{\Doma}{dom}
\DeclareMathOperator{\Ima}{Im}
\DeclareMathOperator{\Vars}{Vars}
\newcommand{\set}[2]{\ensuremath{\left\{ #1 \mid #2 \right\}}}
\newcommand{\isomorph}{\cong}
\newcommand{\N}{\ensuremath{\mathbb{N}}}
\newcommand{\gr}[2][]{\text{gr}_{#1}(#2)}
\DeclareMathOperator{\ltp}{\ensuremath{<}}
\DeclareMathOperator{\addp}{\ensuremath{+}}
\DeclareMathOperator{\mulp}{\ensuremath{\times}}
\newcommand{\ccg}[1][]{\ensuremath{\ComplexityFont{G}_{#1}}}
\theoremstyle{plain}
\newtheorem{conjecture}[theorem]{Conjecture}
\newtheorem{fact}[theorem]{Fact}
\title{On the Implicit Graph Conjecture}
\author[1]{Maurice Chandoo}
\affil[1]{Leibniz Universität Hannover, Theoretical Computer Science,\\
    Appelstr. 4, 30167 Hannover, Germany\\
    \texttt{chandoo@thi.uni-hannover.de}}
\authorrunning{M. Chandoo} 
\subjclass{G.2.2 Graph Theory}
\keywords{adjacency labeling scheme, complexity classes, diagonalization, logic}
\begin{document}

\maketitle

\begin{abstract}
The implicit graph conjecture states that every sufficiently small, hereditary graph class has a labeling scheme with a polynomial-time computable label decoder. 
We approach this conjecture by investigating classes of label decoders defined in terms of complexity classes such as P and EXP. For instance, GP denotes the class of graph classes that have a labeling scheme with a polynomial-time computable label decoder. Until now it was not even known whether GP is a strict subset of GR where R is the class of recursive languages. We show that this is indeed the case and reveal a strict hierarchy akin to classical complexity. We also show that classes such as GP can be characterized in terms of graph parameters. This could mean that certain algorithmic problems are feasible on every graph class in GP.
Lastly, we define a more restrictive class of label decoders using first-order logic that already contains many natural graph classes such as forests and interval graphs. We give an alternative characterization of this class in terms of directed acyclic graphs. 
By showing that some small, hereditary graph class cannot be expressed with such label decoders a weaker form of the implicit graph conjecture could be disproven.
\end{abstract}

\section{Introduction}
The class of interval graphs has at most $2^{\mathcal{O}(n \log n)}$ graphs on $n$ vertices. Neither adjacency matrices nor lists are asymptotically space optimal to represent this class since only $\mathcal{O}(n \log n)$ bits should be used to store a graph on $n$ vertices. However, due to the geometrical representation of this class every vertex of an interval graph can be assigned an interval on a discrete line with $2n$ points. Stated differently, every vertex can be labeled with two numbers between $1$ and $2n$ and adjacency of two vertices can be determined by comparing the four numbers. Storing two such numbers for all $n$ vertices requires $n \log 4n^2$ bits and thus is asymptotically optimal. Labeling schemes, also known as implicit representation, generalize this kind of representations by allowing to store a $\mathcal{O}(\log n)$ long binary label at every vertex such that adjacency between two vertices can be determined by running an algorithm on the two labels. 
We investigate what graph classes can or cannot be represented in such a way when restricting the computational complexity of the function that determines adjacency, also called label decoder.

Let us call a graph class that has at most $2^{\mathcal{O}(n \log n)}$ graphs on $n$ vertices small. A simple counting argument shows that only small graph classes can have labeling schemes. The first question that springs to mind is whether all small graph classes have a labeling scheme. This is not the case as Spinrad shows by giving a small, non-hereditary graph class as counter-example in~\cite[Thm.~7]{spinrad}.
Now, the question becomes whether all small, hereditary graph classes have a labeling scheme; this is known as implicit graph conjecture(IGC).  
This question was already posed more than two decades ago in 1992 by Kannan, Naor and Rudich~\cite{kannan} and has been brought up again by Spinrad~\cite{spinrad}. But despite being such an old question not much is known in this regard. One such result is: every tiny, hereditary graph class admits a labeling scheme with labels of constant length~\cite{scheinerman}. Tiny means that there exist $n_0 \in \mathbb{N}$ and $k < \frac{1}{2}$ such that the class has at most $2^{kn \log n}$ labeled graphs on $n$ vertices for all $n \geq n_0$. This follows from the insight that every tiny, hereditary graph class has only a constant number of twin-free graphs, which makes such classes rather uninteresting. On the other hand, small, hereditary graph classes such as planar or circular-arc graphs can have a rich structure. Candidates for the IGC, i.e.~small, hereditary graph classes for which no labeling scheme is known, are line segment graphs, (unit) disk graphs, $k$-dot product graphs and $k$-sphere graphs~\cite{fiduccia,mcdiarmid,kang}. It is interesting to note that the obvious labeling schemes for line segment and disk graphs using their geometrical representation does not work since coordinates and radii can require an exponential number of bits~\cite{mcdiarmid} unlike in the case of interval graphs.

A different aspect of labeling schemes that has been extensively studied are lower and upper bounds on the label length, i.e.~the constant lurking in $\mathcal{O}(\log n)$, which is related to small universal graphs. A recent result shows that graphs of bounded arboricity $k$ admit a labeling scheme with optimal label length $k \log n + \mathcal{O}(1)$~\cite{alstrup}. Besides, labeling schemes can be generalized in various ways. One variant are distance labeling schemes where one wants to infer the distance between two vertices given their labels~\cite{gavoille}. In~\cite{korman} it was proposed to consider multiple labels instead of only two. Another natural extension is to consider labeling schemes for graph classes that are not small by allowing longer labels while still maintaining the condition of being asymptotically space optimal~\cite{spinrad}. However, here we shall investigate the original variant of this concept.

\paragraph*{Our results.}
For a complexity class $\ComplexityFont{A}$ let $\ccg \ComplexityFont{A}$ denote the class of graph classes that have a labeling scheme where the label decoder can be computed in $\ComplexityFont{A}$ (precise definitions follow). In general, we investigate how choosing various complexity classes for $\ComplexityFont{A}$ affects the class of graph classes $\ccg \ComplexityFont{A}$ that can be represented and how such classes of graph classes can be characterized. 
In section two we argue that $\ccg k \EXP \subsetneq \ccg (k+1) \EXP$ for all $k \geq 1$ by giving a diagonalization argument. A related result for distance labeling schemes can be found in section four of \cite{gavoille}. Additionally, we consider the graph class(es) constructed in the proof as candidate for the implicit graph conjecture. In the third section we show that for every reasonable complexity class $\ComplexityFont{A}$ the class of graph classes $\ccg \ComplexityFont{A}$ can be exactly characterized in terms of a graph parameter. By graph parameter we mean a graph property which maps to the natural numbers such as clique number or tree width. Given such a characterizing graph parameter $\lambda_\ComplexityFont{A}$ for $\ccg \ComplexityFont{A}$ the question of whether a graph class lies in $\ccg \ComplexityFont{A}$ then is equivalent to asking whether it is bounded by $\lambda_\ComplexityFont{A}$. Another consequence of such a characterization is that if for example determining the existence of a Hamiltonian cycle  is fixed-parameter tractable under the parameterization $\lambda_\ComplexityFont{A}$ then for every graph class in $\ccg \ComplexityFont{A}$ this problem can be decided in polynomial time. This means the existence of a labeling scheme can have algorithmic implications. In the last section we define a class of label decoders $\FO$ via first-order logic formulas with arithmetic, i.e.~comparing order, addition and multiplication. Our motivation for introducing this class of label decoders is that the Turing machine model seems too strong to obtain lower bounds. We give upper bounds on the expressiveness of $\ccg \FO$ and its quantifier-free variant. Even if quantifiers, addition and multiplication are disallowed the resulting class $\ccg \FO_{\mathrm{qf}}(\ltp)$ already contains many interesting graph classes such as forests, planar graphs and $k$-interval graphs(also known as multiple interval graphs~\cite{downey}). Lastly, we describe an alternative characterization of $\ccg \FO_{\mathrm{qf}}(\ltp)$ in terms of directed acyclic graphs. 

\paragraph*{Terminology.}
Let $[n] = \{1,2,\dots,n\}$. We write $\log n$ instead of $\lceil \log_2 n \rceil$ and $\exp(n) = 2^n$. Let $\exp^i(n) = \exp(\exp^{i-1}(n))$ for $i \geq 1$ and $\exp^0(n) = n$. The domain and image of a function $f$ are abbreviated by $\Doma(f)$ and $\Ima(f)$ respectively.
We consider only graphs without multiple edges and self-loops and regard undirected graphs as special case of directed ones. 
For a sequence of graphs $G,G_1,\dots,G_m$ on the same vertex set $V$ let us say $G$ is the edge-union of $G_1,\dots,G_m$ if $E(G) = \cup_{i \in [m]} E(G_i)$.
For two graphs $G,H$ we write $G \isomorph H$ to indicate that they are isomorphic. 
We speak of $G$ as unlabeled graph to emphasize that we talk about the isomorphism class of $G$ rather than a specific adjacency matrix of $G$. A graph class is a set of unlabeled graphs, i.e.~closed under isomorphism. A graph class is hereditary if it is closed under taking induced subgraphs. Let $\mathcal{G}$ be the class of all graphs and $\mathcal{G}_n$ is the class of all graphs on $n$ vertices.
A language is a set of words over the binary alphabet $\{0,1\}$. We use complexity class as informal term to mean a set of languages defined in terms of computation and assume that it is countable.  The deterministic Turing machine (TM) is our model of computation when talking about time as resource bound. Let $\ComplexityFont{L}$ denote the complexity class logspace, $\PH$ is the polynomial-time hierarchy, $\R$ is the class of recursive languages and $k\EXP$ is the class of languages computable in time $\exp^k(n^{\mathcal{O}(1)})$ for $k \geq 0$, e.g.~$\ComplexityFont{0}\EXP = \P$. Let $\ComplexityFont{ALL} = \mathcal{P}(\{0,1\}^*)$ be the class of all languages.

\begin{definition}[Labeling scheme]
    A label decoder $F$ is a binary relation over words, i.e.~$F \subseteq \{0,1\}^* \times \{0,1\}^*$. A labeling scheme is a tuple $S = (F,c)$ where $F$ is a label decoder and $c \in \mathbb{N}$ is the label length. A graph $G$ on $n$ vertices is in the class of graphs spanned by $S$, denoted by $G \in \gr{S}$, if there exists a labeling $\ell \colon V(G) \rightarrow \{0,1\}^{c \log n}$ such that for all $u, v \in V(G)$:
    $$ (u,v) \in E(G) \Leftrightarrow (\ell(u),\ell(v)) \in F $$
    We say a graph class $\mathcal{C}$ is represented by (or has) a labeling scheme $S$ if $\mathcal{C} \subseteq \gr{S}$.
\end{definition}

\begin{definition}
    A language $L \subseteq \{0,1\}^*$ induces a label decoder $F_L$ where for all $x,y \in \{0,1\}^*$ with $|x|=|y|$ it holds that $ (x,y) \in F_L \Leftrightarrow xy \in L $.
    
    Let $\ComplexityFont{A}$ be a set of languages and $k \in \N$. A graph class $\mathcal{C}$ is in $\ccg[k]\ComplexityFont{A}$ if there exists a language $L \in \ComplexityFont{A}$ such that $\mathcal{C}$ is represented by $(F_L,c)$ for some $c \leq k$. Analogously, $\mathcal{C}$ is in $\ccg \ComplexityFont{A}$ if $\mathcal{C}$ is in $\ccg[k] \ComplexityFont{A}$ for some $k \in \N$. 
\end{definition}

A class of the form $\ccg \cdot$ is trivially closed under taking subsets, i.e.~if $\mathcal{C} \subseteq \mathcal{C'}$ and $\mathcal{C'} \in \ccg \cdot$ then $\mathcal{C} \in \ccg \cdot$. It follows that $\ccg \cdot$ is closed under intersection as well. However, no $\ccg \cdot$ is closed under complement since the complement of a small graph class is not small. 
For many complexity classes such as $\ComplexityFont{L}$ and $\P$ it is also not hard to show that the classes $\ccg \ComplexityFont{L}$ and $\ccg \P$ are closed under union. 

Here is an example of a language $L$ whose label decoder $F_L$ represents interval graphs: $x_1x_2y_1y_2 \in L$ iff $x_1,x_2,y_1,y_2$ are binary strings of equal length and neither $x_2 < y_1$ nor $y_2 < x_1$ holds where $<$ denotes the lexicographical order. Then the labeling scheme $S = (F_L,4)$ represents interval graphs. Since $L$ can be computed in logspace it follows that interval graphs are in $\ccg[4]\ComplexityFont{L}$.

Using our terminology the implicit graph conjecture can be rephrased as:
\begin{conjecture}[IGC,\cite{kannan}]
    Let $\mathrm{H}$ denote the set of all small, hereditary graph classes.
    $$ \ccg \P \cap \mathrm{H} = \ccg \ComplexityFont{ALL} \cap \mathrm{H} =  \mathrm{H} $$
\end{conjecture}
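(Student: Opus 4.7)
The plan: first observe that this is the Implicit Graph Conjecture itself, which is well-known to be open, so any ``proof proposal'' is necessarily speculative. The statement decomposes into two containments, and I would treat them separately: (i) the inclusion $\mathrm{H} \subseteq \ccg \ComplexityFont{ALL}$, i.e.\ every small hereditary class has some labeling scheme at all, and (ii) the collapse $\ccg \ComplexityFont{ALL} \cap \mathrm{H} \subseteq \ccg \P$, i.e.\ any decoder for a hereditary class can be replaced by a polynomial-time one. Both inclusions $\ccg \P \cap \mathrm{H} \subseteq \ccg \ComplexityFont{ALL} \cap \mathrm{H} \subseteq \mathrm{H}$ are trivial from the definitions and from the counting argument that only small classes can be represented.

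For direction (i), the standard attack is to build, for each $n$, a universal graph $U_n$ of size polynomial in $n$ containing every $G \in \mathcal{C}_n$ as an induced subgraph; a label of a vertex $v$ of $G$ is then the $\mathcal{O}(\log n)$-bit name of its image in $U_n$, and $F_L$ just queries adjacency in $U_n$. I would try to exploit heredity to glue universal graphs $U_n$ across $n$: since every $G \in \mathcal{C}_{n+1}$ restricts to a member of $\mathcal{C}_n$, one can hope to extend $U_n$ to $U_{n+1}$ in a controlled way, possibly via a probabilistic deletion-and-completion argument or a structural decomposition (as works for bounded-arboricity classes \cite{alstrup}). The main obstacle is precisely the content of the conjecture: there is no general tool bridging ``small'' and ``polynomial-size universal graph,'' and the standing candidates (line segment graphs, disk graphs, $k$-dot product graphs, $k$-sphere graphs) resist all existing techniques because their natural geometric labels can require exponentially many bits.

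For direction (ii), given any $L \in \ComplexityFont{ALL}$ representing $\mathcal{C}$ via $(F_L,c)$, I would try to recode the labels: enlarge each label $\ell(v)$ by an $\mathcal{O}(\log n)$-bit advice string (a canonical vertex name, or a hash of its neighborhood signature) so that on the new labels adjacency becomes polynomial-time decidable. The key lever must be heredity, because the paper's own diagonalization shows $\ccg k\EXP \subsetneq \ccg (k+1)\EXP$ without it, so without heredity the collapse $\ccg \ComplexityFont{ALL} \subseteq \ccg \P$ is outright false. A plausible plan is to argue that in a hereditary class the ``hard'' information an arbitrary decoder can encode is bounded by the number of $n$-vertex graphs, hence by $2^{\mathcal{O}(n\log n)}$, and therefore compressible into the label itself rather than the decoder.

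I expect direction (i) to be the main obstacle: even a single candidate such as disk graphs is open, and producing a \emph{generic} construction from the raw counting hypothesis alone would demand a genuinely new structural principle. In the spirit of the rest of the paper, a more realistic near-term goal is to attack the weaker conjecture $\ccg \FO \cap \mathrm{H} = \mathrm{H}$ and, if anything, \emph{refute} it by exhibiting a small hereditary class outside $\ccg \FO_{\mathrm{qf}}(\ltp)$, which would at least falsify a strong form of IGC without resolving it.
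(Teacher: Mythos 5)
This statement is a conjecture, not a theorem: the paper offers no proof of it, so there is nothing to compare your attempt against. Your proposal correctly recognizes this and does not actually claim to prove anything; what it does assert is accurate. The decomposition into (i) $\mathrm{H} \subseteq \ccg \ComplexityFont{ALL}$ and (ii) $\ccg \ComplexityFont{ALL} \cap \mathrm{H} \subseteq \ccg \P$ is the right one, the two inclusions $\ccg \P \cap \mathrm{H} \subseteq \ccg \ComplexityFont{ALL} \cap \mathrm{H} \subseteq \mathrm{H}$ are indeed immediate from $\P \subseteq \ComplexityFont{ALL}$ and closure under subsets, and your identification of (i) as the primary open obstacle matches the paper's own remark that it is unclear whether even the second equality holds, i.e.\ whether every small hereditary class admits any labeling scheme at all (equivalently, a polynomial-size induced-universal graph). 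Your observation that heredity is essential for (ii) because the paper's diagonalization separates $\ccg k\EXP$ from $\ccg (k+1)\EXP$ without it is also a correct and relevant reading of Fact~\ref{fact:gpr} and the surrounding discussion. Two minor points: the sketched attacks (gluing universal graphs across $n$, or compressing an arbitrary decoder into $\mathcal{O}(\log n)$ bits of advice per vertex) are, as you concede, not arguments but restatements of the difficulty; and the weak conjecture the paper actually poses concerns $\ccg \FO_{\mathrm{qf}}$ rather than $\ccg \FO$. Since no proof exists here, the honest conclusion you reach --- that the statement remains open and the realistic goal is to attack or refute weaker variants --- is the correct one.
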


As of now it is far from clear whether even the second equality holds, i.e.~can every small, hereditary graph class be represented by some labeling scheme, leaving computability issues aside?
This is a graph-theoretic question dealing with the existence of polynomial-sized universal graphs that should be addressed before one can expect to prove the implicit graph conjecture.

\section{Hierarchy of Implicit Representations}
In the previous section we have seen that every language $L$ can be interpreted as label decoder $F_L$. Therefore a set of languages $\ComplexityFont{A}$ can be understood as set of label decoders and $\ccg \ComplexityFont{A}$ denotes the set of graph classes that can be represented by a labeling scheme $(F,c)$ with $F \in \ComplexityFont{A}$ and $c \in \N$. Inclusion carries over to this setting meaning $\ComplexityFont{A} \subseteq \ComplexityFont{B}$ implies $\ccg \ComplexityFont{A} \subseteq \ccg \ComplexityFont{B}$. For separations, however, this is not true, i.e.~there exist $\ComplexityFont{A}, \ComplexityFont{B}$ with $\ComplexityFont{A} \subsetneq \ComplexityFont{B}$ and $\ccg \ComplexityFont{A} = \ccg \ComplexityFont{B}$. Spinrad remarks that it is not known whether restricting the label decoder to be computable in polynomial time versus requiring it to be simply computable makes a difference in terms of the graph classes that can be represented~\cite[p.~22]{spinrad}.  
We resolve this question by applying diagonalization, which yields many of the separations known in the classical setting.  
For the sake of clarity we prove the following class of separations which we deem most interesting with respect to the IGC since it yields the smallest class($\ccg \ComplexityFont{2EXP}$) that can be separated from $\ccg \P$ by this argument:
\begin{theorem}
    $\ccg k \EXP \subsetneq \ccg (k+1) \EXP$ for all $k \geq 1$. 
    \label{thm:hierarchy}
\end{theorem}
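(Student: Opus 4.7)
The plan is to diagonalize against $k\EXP$ label decoders. Fix an enumeration $(M_t, c_t)_{t \in \N}$ of all pairs where $M_t$ is a clocked deterministic Turing machine with clock $\exp^k(n^t)$ and $c_t \in \N$ is a candidate label length, so that every $k\EXP$ label decoder paired with any constant label length occurs somewhere in this list. For each $t$ I would pick $n_t$ growing sufficiently fast that $n_t \geq t$ (so that the index $t$ fits into $O(\log n_t)$ bits) and $\binom{n_t}{2} > c_t n_t \log n_t$. Let $H_t$ be any graph on $n_t$ vertices that is \emph{not} representable by $(F_{M_t}, c_t)$, and put $\mathcal{C} = \{H_t \mid t \in \N\}$.

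Existence of each $H_t$ is a counting argument: there are $2^{\binom{n_t}{2}}$ graphs on $n_t$ labelled vertices, but at most $2^{c_t n_t \log n_t}$ of them arise from some labelling of length $c_t \log n_t$, and by choice of $n_t$ the former strictly exceeds the latter. Since the enumeration exhausts all of $k\EXP \times \N$, every candidate $(F_L, c)$ is refuted at its index by the corresponding $H_t \in \mathcal{C}$; hence $\mathcal{C} \notin \ccg k \EXP$.

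For the upper bound, use labels of the form $(t, v)$ packed into $c' \log n_t$ bits for a fixed constant $c'$ (say $c'=2$). On input of length $m$ the decoder reads $t$ (which is determined by $m$), reconstructs $H_t$ by brute force — enumerate all graphs on $n_t$ vertices and, for each, enumerate the $2^{c_t n_t \log n_t}$ candidate labellings while simulating $M_t$ on every pair of labels — and then outputs the adjacency bit for the decoded vertex pair. The delicate step, and the main obstacle, is the time analysis: simulating $M_t$'s clock $\exp^k(n^t)$ on inputs of size $O(\log n_t) = O(m)$ costs $\exp^k(m^{O(t)})$, and because $t$ grows linearly in $m$ this becomes $\exp^k(2^{O(m \log m)}) = \exp^{k+1}(O(m \log m))$, which lies in $(k+1)\EXP$. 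The outer enumerations contribute only $2^{O(n_t^2)} = \exp^2(O(m))$, absorbed by the $\exp^{k+1}$ term whenever $k \geq 1$. It is precisely this extra exponential — needed to swallow the dependence on $t$, a quantity unbounded in any fixed polynomial of $m$ — that pushes the construction out of $k\EXP$ and into $(k+1)\EXP$, so no self-contradiction arises.
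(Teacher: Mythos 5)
Your proposal is correct and follows essentially the same route as the paper: enumerate all (clocked $k\EXP$ machine, label length) pairs along an increasing sequence of graph sizes, use the counting argument that $\gr{S}$ is small to find a non-representable graph at each size, and recover the diagonal class with a brute-force decoder whose cost is dominated by simulating a machine whose clock exponent grows with the input length, landing in $(k+1)\EXP$ only because the outer double-exponential enumeration is absorbed when $k\geq 1$. The one spot to tighten is the growth condition on $n_t$: you stipulate only $n_t\geq t$, but your time analysis needs $t=O(\log n_t)$ (e.g.\ $n_t=2^t$, which also makes the label lengths pairwise distinct so that $t$ is determined by the input length, and forces $H_t$ to be the \emph{first} non-representable graph in a fixed order so the decoder reconstructs it canonically) --- exactly the role played by the paper's ``admissible pairing.''
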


The basic idea behind the proof of this statement is the following diagonalization argument. Let $\ComplexityFont{A} = \{F_1,F_2, \dots \}$ be a set of label decoders. Then a labeling scheme in $\ccg \ComplexityFont{A}$ can be seen as pair of natural numbers, one for the label decoder and one for the label length. Let $\tau : \N \rightarrow \N^2$ be a surjective function and $S_{\tau(x)} = (F_y,z)$ with $\tau(x) = (y,z)$.
It follows that for every labeling scheme $S$ in $\ccg \ComplexityFont{A}$ there exists an $x \in \N$ such that $S = S_{\tau(x)}$.  
The following graph class cannot be in $\ccg \ComplexityFont{A}$:
$$ G \in \mathcal{C}_\ComplexityFont{A} \Leftrightarrow G \text{ is the smallest graph on $n = |V(G)|$ vertices s.t.~} G \notin \gr{S_{\tau(n)}} $$ 
where smallest is meant w.r.t.~some order such as the lexicographical one. Note that the order must be for unlabeled graphs. However, an order for labeled graphs can be easily adopted to unlabeled ones.   
Assume $\mathcal{C}_\ComplexityFont{A}$ is in $\ccg \ComplexityFont{A}$ via the labeling scheme $S$. There exists an $n \in \N$ such that $S = S_{\tau(n)}$ and it follows that $\mathcal{C}_\ComplexityFont{A}$ contains a graph on $n$ vertices that cannot be in $S$ per definition, contradiction. Then it remains to show that $\mathcal{C}_\ComplexityFont{A}$ is in the class that we wish to separate from $\ccg \ComplexityFont{A}$. 

For the remainder of this section we formalize this idea in three steps. First, we state the requirements for a pairing function $\tau$ and show that such a function exists. We continue by arguing that the diagonalization graph class $\mathcal{C}_{\ComplexityFont{A}}$ is not contained $\ccg \ComplexityFont{A}$. In the last step we construct a label decoder for $\mathcal{C}_{k \EXP}$ and show that it can be computed in $(k+1)\EXP$. 

\begin{definition}
    A surjective function $\tau : \N \rightarrow \N^2$ is an admissible pairing if 
    \begin{enumerate}
        \item $|\tau^{-1}(y,z)|$ is infinite for all $y,z \in \N$,
        \item $\tau_y(x),\tau_z(x) \in \mathcal{O}(\log x)$ with $\tau(x) = (\tau_y(x),\tau_z(x))$,
        \item $\tau(x)$ is undefined if $x$ is not a power of two, and
        \item $\tau$ is computable in polynomial time given its input in unary.
    \end{enumerate}
    \label{def:pairing}
\end{definition}

Note, that a graph on $n$ vertices gets assigned labels of the same length as a graph on $m$ vertices whenever $\log n = \log m$ (rounded up). The third condition prevents this from happening, i.e.~for all $G \neq H \in \mathcal{C}_{\ComplexityFont{A}}$ it holds that their vertices must have labels of different length.
\begin{lemma}
    There exists an admissible pairing function.
    \label{lem:pairing}
\end{lemma}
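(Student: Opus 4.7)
The plan is to build $\tau$ by enumerating $\N^3$ and projecting to the first two coordinates, using powers of two as the only defined inputs. The third coordinate acts as dummy slack that guarantees infinite fibers; without it, any bijection $\N \to \N^2$ would give singleton fibers and violate condition~1 of Definition~\ref{def:pairing}.

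Concretely, I fix the layered enumeration $e : \N \to \N^3$ obtained by listing all triples $(a,b,c)$ with $a+b+c = n$ in lexicographic order, for $n = 0, 1, 2, \dots$, and concatenating the layers. I then define $\tau(2^k) = (y,z)$ where $e(k) = (y,z,m)$, and leave $\tau(x)$ undefined otherwise, which yields condition~3 by construction. For condition~1, fixing any target $(y,z) \in \N^2$, the triples $(y,z,m)$ for $m \in \N$ all project to $(y,z)$ and occupy distinct indices of $e$, so every fiber is infinite. For condition~2, a triple in layer $n$ has index at least $\binom{n+2}{3} = \Theta(n^3)$, so whenever $e(k) = (y,z,m)$ we get $y, z \leq n = \mathcal{O}(k^{1/3})$; since $k = \log_2 x$ when $x$ is a power of two, this gives $y, z = \mathcal{O}((\log x)^{1/3}) \subseteq \mathcal{O}(\log x)$.

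For condition~4, on input $x$ in unary I first scan the string to check that $x$ is a power of two and, if so, compute $k = \log_2 x$. I then iterate $n = 0, 1, 2, \dots$ while accumulating $\binom{n+2}{2}$ until the running total exceeds $k$, which identifies the layer containing index $k$, and decode the offset within that layer by a short loop over pairs $(a,b)$ with $a+b \leq n$. All intermediate numbers are bounded by $k = \mathcal{O}(\log x)$ and the two loops run $\mathcal{O}(k^{1/3})$ and $\mathcal{O}(n^2)$ times respectively, which is trivially polynomial in the unary input length $|x| = x$ with huge room to spare. No step presents a genuine obstacle; the only design choice worth pinning down is the use of an enumeration with cubic layer growth, so that the bound in condition~2 comes out with slack while condition~1 is still satisfied.
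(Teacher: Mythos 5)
Your proof is correct, and at the structural level it follows the same strategy as the paper: define $\tau$ only on powers of two, decode the exponent $k=\log_2 x$ into a triple, and discard the third coordinate so that every pair has infinitely many preimages. The difference is in the concrete surjection from exponents to triples. The paper's proof is a single sentence: $\tau(x)=(y,z)$ iff $x = 2^{2^y\cdot 3^z\cdot 5^w}$ for some $w\geq 0$, i.e.~the triple is encoded multiplicatively in the prime factorization of the exponent; verifying condition~4 then amounts to extracting the exponents of $2$, $3$ and $5$ from a number of size at most $\log x$, and condition~2 comes out as $y,z \leq \log\log x$, which is asymptotically tighter than your $\mathcal{O}((\log x)^{1/3})$ (both are far below the required $\mathcal{O}(\log x)$). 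Your layered enumeration of $\N^3$ buys a more elementary decoding procedure --- counting binomial coefficients rather than trial division --- and you spell out the verification of all four conditions, which the paper leaves entirely to the reader; the paper's encoding buys a cleaner closed form. One cosmetic remark: the paper's label decoders are indexed $F_1,F_2,\dots$, so if $\N$ is taken to start at $1$ you should shift your enumeration accordingly, but this is immaterial to the argument.
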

\begin{proof}
    Consider the function $\tau(x) = (y,z)$ iff $x = 2^{2^y \cdot 3^z \cdot  5^w}$ for some $w \geq 0$. 
\end{proof}

\begin{definition}
    Let $\ComplexityFont{A}$ be a set of languages, $\prec$ an order on unlabeled graphs and $\tau$ an admissible pairing. The diagonalization graph class of $\ComplexityFont{A}$ is defined as:
    $$ \mathcal{C}_\ComplexityFont{A} = \bigcup_{n \in \Doma(\tau)} \left\{ G \in \mathcal{G}_n    \: \middle|            
    G \text{ is the smallest graph w.r.t.~$\prec$ not in } \gr{S_{\tau(n)}}                                
    \right\} $$
\end{definition}

When we consider the diagonalization graph class of a set of languages we assume the lexicographical order for $\prec$ and the function given in the proof of Lemma~\ref{lem:pairing} for $\tau$.
\begin{lemma}
    For every countable set of languages $\ComplexityFont{A}$ it holds that  $\mathcal{C}_\ComplexityFont{A} \notin \ccg \ComplexityFont{A}$.
    \label{lem:ca_not_in_a}
\end{lemma}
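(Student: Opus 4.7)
The plan is to formalize the diagonalization sketched just before the lemma. First, I would use the countability of $\ComplexityFont{A}$ to fix an enumeration $L_1, L_2, \ldots$ of its languages, which induces an enumeration $F_{L_1}, F_{L_2}, \ldots$ of label decoders; combined with $\tau$ this makes the labeling scheme $S_{\tau(n)} = (F_{L_y}, z)$ unambiguous whenever $\tau(n) = (y,z)$, matching exactly the notation used in the definition of $\mathcal{C}_\ComplexityFont{A}$.

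The main step is the diagonalization itself. Assume for contradiction that $\mathcal{C}_\ComplexityFont{A} \in \ccg \ComplexityFont{A}$, witnessed by some $L_i \in \ComplexityFont{A}$ and some label length $c \in \N$, so that $\mathcal{C}_\ComplexityFont{A} \subseteq \gr{(F_{L_i},c)}$. Conditions (1)--(3) of Definition~\ref{def:pairing} yield arbitrarily large $n$ in $\tau^{-1}(i,c)$, each a power of two; I would pick such an $n$ so large that $\gr{(F_{L_i},c)}$ fails to span every graph in $\mathcal{G}_n$. For this $n$ the $\prec$-minimum element of $\mathcal{G}_n \setminus \gr{S_{\tau(n)}}$ exists, belongs to $\mathcal{C}_\ComplexityFont{A}$ by construction, and yet is outside $\gr{(F_{L_i},c)} = \gr{S_{\tau(n)}}$, contradicting the containment.

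The one thing to verify is that such a ``large enough $n$'' exists. A labeling scheme with label length $c$ assigns each of the $n$ vertices a bit string of length $c \log n$, so it spans at most $2^{cn\log n}$ labeled graphs on $n$ vertices, which is asymptotically dwarfed by the $2^{n(n-1)}$ graphs on $n$ vertices; combined with the infinite-preimage property of $\tau$, this furnishes the required $n \in \tau^{-1}(i,c)$ with $\mathcal{G}_n \not\subseteq \gr{S_{\tau(n)}}$. I do not expect this step to be a real obstacle; the only bookkeeping subtlety is to make sure that $\prec$ is interpreted on unlabeled graphs so that $\mathcal{C}_\ComplexityFont{A}$ is closed under isomorphism, as the remark preceding the lemma already guarantees.
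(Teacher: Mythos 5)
Your proposal is correct and follows essentially the same route as the paper's proof: assume containment via some scheme $S=(F_{L_i},c)$, use the infinite preimage $\tau^{-1}(i,c)$ to find arbitrarily large $n$ with $S = S_{\tau(n)}$, and observe that for $n$ large enough $\gr{S_{\tau(n)}}$ cannot contain all of $\mathcal{G}_n$ because it is a small graph class, so the $\prec$-minimal missing graph witnesses $\mathcal{C}_\ComplexityFont{A} \not\subseteq \gr{S}$. The only difference is that you spell out the counting bound ($2^{cn\log n}$ versus $2^{n(n-1)}$) that the paper leaves implicit in the word ``small.''
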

\begin{proof}
As argued in the paragraph after Theorem \ref{thm:hierarchy} it holds that for any labeling scheme $S$ in $\ccg \ComplexityFont{A}$ there exists a graph $G$ that is in $\mathcal{C}_{\ComplexityFont{A}}$ but not in $\gr{S}$ and thus this lemma holds. Since the labeling scheme $S$ is in $\ccg \ComplexityFont{A}$ there exists an $n \in \N$ such that $S = S_{\tau(n)}$ where $S_{\tau(n)} = (F_y,z)$, $\tau(n)=(y,z)$ and  $\ComplexityFont{A} = \{F_1,F_2,\dots \}$. Due to the fact that $|\tau^{-1}(y,z)|$ is infinite it follows that there exists an arbitrarily large $n \in \N$ such that $S = S_{\tau(n)}$.
For $\mathcal{C}_{\ComplexityFont{A}} \setminus \gr{S}$ to be non-empty it must hold that $\gr{S_{\tau(n)}}$ does not contain all graphs on $n$ vertices. By choosing $n$ to be sufficiently large this is guaranteed since $\gr{S_{\tau(n)}}$ is a small graph class.         
\end{proof}

To show that $\mathcal{C}_\ComplexityFont{A}$ is in some class $\ccg \ComplexityFont{B}$ we need to define a labeling scheme $S_{\ComplexityFont{A}} = (F_{\ComplexityFont{A}},1)$ that represents $\mathcal{C}_\ComplexityFont{A}$ and consider the complexity of computing its label decoder.
\begin{definition}
    Let $\ComplexityFont{A}$ be a set of languages. 
    For $G \in \mathcal{C}_{\ComplexityFont{A}}$ let $G_0$ denote the smallest labeled graph with $G_0 \isomorph G$. We define the label decoder $F_{\ComplexityFont{A}}$ as follows. For every $m \in \N$ such that there exists $G \in \mathcal{C}_{\ComplexityFont{A}}$ with $|V(G)| = 2^m$ and for all $x, y \in \{0,1\}^m$ let
    $$ (x,y) \in F_{\ComplexityFont{A}} \Leftrightarrow (x,y) \in E(G_0)  $$
\end{definition}

It can be assumed that $G_0$ has $\{0,1\}^m$ as vertex set. Also, note that $\mathcal{C}_{\ComplexityFont{A}}$ has at most one graph on $n$ vertices for any $n$. Therefore the label decoder $F_{\ComplexityFont{A}}$ is well-defined. It is easy to see that $(F_{\ComplexityFont{A}},1)$ represents $\mathcal{C}_\ComplexityFont{A}$, i.e.~$\mathcal{C}_\ComplexityFont{A} \subseteq \gr{F_\ComplexityFont{A},1}$.

Up to this point the exact correspondence between $y \in  \N$ and the label decoder $F_y$ was not important. In fact, we only required the set of label decoders $\ComplexityFont{A}$ to be countable. To show that the label decoder $F_{k\EXP}$ can be computed in $(k+1)\EXP$ it is important that given $y$ the label decoder $F_y$ from $k \EXP$ can be effectively computed. The following lemma grants this.    
\begin{lemma}
    For every $k\geq 0$ there exists a mapping $f \colon \N \rightarrow \ComplexityFont{ALL}$ such that $\Ima(f) = k\EXP$ and on input $x \in \N$ in binary and $w \in \{0,1\}^*$ the question $w \in f(x)$ can be decided in $\exp^{k+1}(n^{\mathcal{O}(1)})$ time with $n = |w|+ \log x$.
    \label{lem:effective}
\end{lemma}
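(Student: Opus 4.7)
The plan is to use a clocked enumeration of Turing machines. Fix a standard polynomial-time computable bijection $\pi \colon \N \to \N \times \N$ with $\pi(x) = (i,c)$ satisfying $i, c \leq O(\log x)$; let $M_1, M_2, \dots$ be a canonical enumeration of deterministic Turing machines whose descriptions satisfy $|M_i| \leq O(\log i)$. Define $f(x) = L$ where $L$ is the language decided by the following procedure on input $w$: simulate $M_i$ on $w$ for at most $\exp^k(|w|^c)$ steps using a universal TM, accept iff $M_i$ halts within the clock and accepts.

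Next I would verify $\Ima(f) = k\EXP$. For the inclusion $\Ima(f) \subseteq k\EXP$: the procedure above runs in time $\exp^k(|w|^{O(c)})$ which is $\exp^k(|w|^{O(1)})$ for each \emph{fixed} $x$ (hence fixed $c$), so $f(x) \in k\EXP$. For $\Ima(f) \supseteq k\EXP$: any $L \in k\EXP$ has a decider $M_i$ running in time $\exp^k(n^c)$ for some constants $i,c$; choosing $x$ with $\pi(x) = (i,c)$ gives $f(x) = L$ because the clock never expires before $M_i$ halts.

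For the complexity bound on the combined input $(x,w)$, set $n = |w| + \log x$. Decoding $\pi(x) = (i,c)$ is polynomial in $\log x \leq n$, and extracting $M_i$ takes time polynomial in $i \leq x \leq 2^n$, so preprocessing fits in $\exp(n^{O(1)})$. The dominant cost is the universal simulation: writing out and counting down the clock $\exp^k(|w|^c)$ and executing one simulated step both cost a polynomial in this clock value and in $|M_i|$. Using $c \leq O(\log x) \leq O(n)$ and $|w| \leq n$, we have
$$ |w|^c \leq n^{O(n)} \leq 2^{O(n \log n)} \leq \exp(n^{O(1)}), $$
so $\exp^k(|w|^c) \leq \exp^{k+1}(n^{O(1)})$, and the total running time is still $\exp^{k+1}(n^{O(1)})$.

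The only real obstacle is arranging the encoding $\pi$ so that $c$ is at most logarithmic in $x$ while still hitting every $(i,c) \in \N^2$; any standard polynomial-time pairing (e.g.\ Cantor's, or $x \mapsto (i,c)$ with $x = 2^i(2c+1)$) does this. Everything else is a bookkeeping exercise with the universal simulator and the tower $\exp^k$, which collapses one level up exactly once because of the single $\exp(n^{O(1)})$ blow-up introduced by the term $|w|^c$.
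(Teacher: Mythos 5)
Your argument is essentially the paper's: encode a machine index and a clock exponent into $x$, run a clocked universal simulation for $\exp^k(|w|^c)$ steps, and note that because the exponent $c$ is logarithmic in $x$, hence at most $n$, the clock value is $\exp^k(n^{\mathcal{O}(n)}) \subseteq \exp^{k+1}(n^{\mathcal{O}(1)})$, costing exactly one extra level (the paper takes $x = 2^y3^z$ and clock $\exp^k(y|w|^y) \leq \exp^k(n^{n+1})$). The one place you stumble is the pairing itself, which you yourself call the only real obstacle: no \emph{bijection} $\N \to \N^2$ can have both coordinates $\mathcal{O}(\log x)$, since only $\mathcal{O}(\log^2 x)$ such pairs exist to receive $x$ distinct values; Cantor's pairing yields coordinates of order $\sqrt{x}$; and $x = 2^i(2c+1)$ bounds the machine index $i$ rather than the clock exponent $c$, leaving $c$ as large as $x/2 \approx 2^n$, whence $|w|^c$ is doubly exponential in $n$ and two levels would be lost instead of one. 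What you actually need --- and all you need --- is a surjection (injectivity is irrelevant) under which the \emph{clock exponent} is $(\log x)^{\mathcal{O}(1)}$ while the machine index may be as large as $x$, e.g.\ $x = 2^c(2i+1)$ or the paper's $x = 2^y3^z$; with that substitution the rest of your bookkeeping, including the explicit verification of $\Ima(f) = k\EXP$ which the paper leaves implicit, goes through.
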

\begin{proof}
    The lemma essentially states that all TMs running in $k \EXP$ can be simulated in $(k+1) \EXP$. Given the G\"{o}delization of such a TM $M$ and a word $w$ as input the question whether $M$ accepts $x$ can be decided by a TM in $(k+1)\EXP$. 
    Fix a reasonable encoding of TMs as natural numbers, i.e.~given $z \in \N$ then $M_z$ is a TM. Let $f(x) = (y,z) \Leftrightarrow x = 2^y 3^z$. It holds that $y \leq \log x$ for every $z \geq 0$. On input $x \in \N$ and $w \in \{0,1\}^*$ the reference input length is $n = |w| + \log x$. Compute $f(x) = (y,z)$ and then simulate $M_z$ on $w$ for $\exp^k(y|w|^y) \leq \exp^k(n^{n+1}) \in \mathcal{O}(\exp^{k+1}(n^2)) $ steps.  
\end{proof}

\begin{lemma}
    $F_{k\EXP} \in (k+1)\EXP$ for every $k \geq 1$.
    \label{lem:fa_compute}
\end{lemma}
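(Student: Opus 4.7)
The plan is to compute $F_{k\EXP}(x,y)$ on input $x,y \in \{0,1\}^m$ by brute force: explicitly construct the unique graph $G \in \mathcal{C}_{k\EXP}$ on $n := 2^m$ vertices (if there is one), canonicalize it to $G_0$, and return the bit indicating whether the pair of labels $(x,y)$ is an edge of $G_0$. The algorithm is deterministic and will run in $\exp^{k+1}(m^{\mathcal{O}(1)})$ steps, which is $(k+1)\EXP$ in the input length $|x|+|y| = 2m$.

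In more detail: I would first check via condition~(3) of Definition~\ref{def:pairing} whether $n = 2^m \in \Doma(\tau)$, returning $\mathrm{false}$ if not, and otherwise compute $(y',z) = \tau(n)$ via condition~(4) by feeding $\tau$ the unary input $1^n$; this costs $\mathrm{poly}(2^m)$ time and yields $y',z \in \mathcal{O}(m)$ by condition~(2). Next I would enumerate all graphs on $n$ vertices in lexicographical order and, for each candidate $G$, test whether $G \in \gr{S_{\tau(n)}} = \gr{(F_{y'},z)}$ by iterating over all $2^{zn\log n} = 2^{\mathcal{O}(m^2 2^m)}$ possible labelings $\ell : V(G) \to \{0,1\}^{z\log n}$ and checking each of the $n^2$ vertex pairs against $F_{y'}$. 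Lemma~\ref{lem:effective} is the workhorse here: it lets me simulate $F_{y'}$ on any query word of length $2z\log n = \mathcal{O}(m^2)$ in time $\exp^{k+1}((m^2 + \log y')^{\mathcal{O}(1)}) = \exp^{k+1}(m^{\mathcal{O}(1)})$. The first $G$ failing the membership test is the diagonalizing graph; I normalize it to $G_0$ by trying all $n!$ vertex relabelings and return whether $(x,y) \in E(G_0)$.

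The total running time is then bounded by
$$2^{\mathcal{O}(n^2)} \cdot 2^{\mathcal{O}(zn\log n)} \cdot n^2 \cdot \exp^{k+1}(m^{\mathcal{O}(1)}) \;=\; 2^{\mathcal{O}(4^m)} \cdot 2^{\mathcal{O}(m^2 2^m)} \cdot 4^m \cdot \exp^{k+1}(m^{\mathcal{O}(1)}),$$
which lies in $\exp^{k+1}(m^{\mathcal{O}(1)})$ as soon as $k \geq 1$, because then $\exp^{k+1}(m^{\mathcal{O}(1)}) \geq \exp^{2}(m^{\mathcal{O}(1)})$ already dominates both doubly exponential enumeration factors. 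The main obstacle is really just this bookkeeping---making sure every step (decoding $\tau$ from a unary input of length $2^m$, enumerating graphs, enumerating labelings, invoking Lemma~\ref{lem:effective}) fits under one $\exp^{k+1}(m^{\mathcal{O}(1)})$ envelope---and this is precisely where the hypothesis $k \geq 1$ is needed, since for $k = 0$ the outer enumeration of $2^{\mathcal{O}(4^m)}$ graphs alone already busts the plain $\EXP$ budget.
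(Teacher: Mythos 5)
Your proposal is correct and follows essentially the same route as the paper: brute-force enumeration of all labeled graphs on $2^m$ vertices and all labelings, querying the label decoder via Lemma~\ref{lem:effective}, and observing that the doubly exponential enumeration factors are absorbed by $\exp^{k+1}(m^{\mathcal{O}(1)})$ precisely because $k \geq 1$. The only (harmless) deviation is your extra canonicalization pass over $n!$ relabelings, which the paper avoids by noting that the first labeled graph in lexicographic order failing the representability test is already $G_0$.
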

\begin{proof}
    On input $xy$ with $x,y \in \{0,1\}^m$ and $m \geq 1$ compute $\tau(2^m) = (y,z)$. If it is undefined then reject. Otherwise there is a labeling scheme $S_{\tau(2^m)}=(F_y,z)$ and we need to compute the smallest graph $G_0$ on $2^m$ vertices such that $G_0 \notin \gr{S_{\tau(2^m)}}$. If $G_0$ exists we assume that its vertex set is $\{0,1\}^m$ and accept iff $(x,y) \in E(G_0)$. If it does not exist then reject. 
    
    The graph $G_0$ can be computed as follows. Iterate over all labeled graphs $H$ with $2^m$ vertices in order and over all bijections $\ell \colon V(H) \rightarrow \{0,1\}^{zm}$. Check if $H \in \gr{S_{\tau(2^m)}}$ by checking for every pair of vertices $u,v \in V(H)$ if $(\ell(u),\ell(v)) \in F_y \Leftrightarrow (u,v) \in E(H)$. If this condition fails then $G_0 = H$. To query the label decoder $F_y$ the previous lemma can be applied, i.e.~$y$ can be interpreted as encoding of a TM in $k\EXP$ that can be simulated.  
    
    Let us consider the time requirement w.r.t.~$m$. To compute $\tau(2^m)$ we write down $2^m$ in unary and compute $\tau$ in polynomial time w.r.t.~$2^m$ which is in the order $2^{\mathcal{O}(m)}$. To compute $G_0$ there are four nested loops. The first one goes over all labeled graphs on $2^m$ vertices which is bounded by $\exp^2(2m)$. The second loop considers all possible labelings $\ell$ of which there can be at most $\exp(zm)^{\exp(m)} =\exp(\exp(m)zm)  \leq \exp^2(zm^2) \in \exp^2(m^{\mathcal{O}(1)})$; recall that $z$ is polynomially bounded by $m$ due to Definition~\ref{def:pairing}. The other two loops go over all vertices of $H$ meaning $2^m$. By applying Lemma~\ref{lem:effective} the time required to compute $(\ell(u),\ell(v)) \in F_y$ is $\exp^{k+1}(n_0^{\mathcal{O}(1)})$ with $n_0 := 2zm + \log y$. Since $n_0 \in m^{\mathcal{O}(1)}$ this operation can be computed in $(k+1)$-exponential time. In summary, the runtime order of this algorithm is $\exp^{k+1}(m^{\mathcal{O}(1)})$.
\end{proof}

Now, Lemma~\ref{lem:ca_not_in_a} states that $\mathcal{C}_{k\EXP} \notin \ccg k \EXP$ and from Lemma~\ref{lem:fa_compute} it follows that $\mathcal{C}_{k\EXP} \in \ccg (k+1) \EXP$ therefore proving Theorem \ref{thm:hierarchy}. Notice, that this argument fails to show that $\ccg \P \subsetneq \ccg \EXP $ because the runtime to compute the label decoder $F_\P$ is at least double exponential due to the first two loops mentioned in the proof of Lemma \ref{lem:fa_compute}. Can this argument  be modified to separate these two classes as well? This seems rather unlikely. 
Nonetheless, we now know that there exist graph classes that have an implicit representation but a polynomial-time computable label decoder does not suffice to capture them.
\begin{fact}
    If there exists a small, hereditary graph class $\mathcal{C}$ with $\mathcal{C}_{\P} \subseteq \mathcal{C}$ then the implicit graph conjecture is false.    
    \label{fact:gpr}
\end{fact}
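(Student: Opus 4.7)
The plan is to prove the contrapositive: assuming the implicit graph conjecture holds and that such a class $\mathcal{C}$ exists, derive a contradiction with Lemma~\ref{lem:ca_not_in_a}.

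First I would invoke the hypothesis that $\mathcal{C}$ is small and hereditary, which by IGC (the first equality $\ccg \P \cap \mathrm{H} = \mathrm{H}$) forces $\mathcal{C} \in \ccg \P$. Next I would appeal to the subset-closure property of $\ccg \P$ noted just after the definition of $\ccg \ComplexityFont{A}$: any subset of a graph class in $\ccg \P$ is again in $\ccg \P$. Since $\mathcal{C}_{\P} \subseteq \mathcal{C}$ by hypothesis, this yields $\mathcal{C}_{\P} \in \ccg \P$.

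Finally, I would point out that this contradicts Lemma~\ref{lem:ca_not_in_a}, which established $\mathcal{C}_{\P} \notin \ccg \P$ (taking $\ComplexityFont{A} = \P$, which is countable). Hence IGC must fail. There is no real obstacle here; the statement is essentially an immediate corollary of Lemma~\ref{lem:ca_not_in_a} combined with the trivial downward-closure of $\ccg \cdot$ under subsets, so the proof is a few lines and the only thing to be careful about is citing both ingredients explicitly.
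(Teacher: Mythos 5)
Your proposal is correct and is exactly the argument the paper intends: the fact is stated without proof as an immediate corollary of Lemma~\ref{lem:ca_not_in_a} (with $\ComplexityFont{A}=\P$ countable) together with the downward closure of $\ccg\P$ under subsets. Both ingredients are cited correctly, so there is nothing to add.
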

For two graph classes $\mathcal{C}$ and $\mathcal{D}$ let us call $\mathcal{D}$ the hereditary closure of $\mathcal{C}$ if $G \in \mathcal{D}$ iff $G$ occurs as induced subgraph of some graph in $\mathcal{C}$.
If the hereditary closure of $\mathcal{C}_{\P}$ is not a small graph class then it follows that the premise of Fact \ref{fact:gpr} is unsatisfiable. Recall that $\mathcal{C}_\P$ is not an unambiguous graph class but depends on the chosen order $\prec$ and pairing $\tau$, which makes it difficult to analyze what kind of graphs are contained in such a class. 

\section{Parameter Characterization}
We consider a graph parameter to be a total function $\lambda \colon \mathcal{G} \rightarrow \N$ and call it natural if the cardinality of its image is infinite. Examples of natural graph parameters are the chromatic number or the diameter. A graph class $\mathcal{C}$ is bounded by a graph parameter $\lambda$ if there exists a $c \in \N$ such that for all $G \in \mathcal{C}$ it holds that $\lambda(G) \leq c$.
We show that for every complexity class $\ComplexityFont{A}$ such that $\ccg \ComplexityFont{A}$ is closed under union there exists a graph parameter that characterizes $\ccg \ComplexityFont{A}$. One interesting aspect of such a characterization is that it might reveal algorithmic implications for graph classes that have a labeling scheme of certain complexity.

\begin{definition}
    Let $\mathbb{C}$ be a set of graph classes and $\lambda$ is a graph parameter. We say $\lambda$ characterizes $\mathbb{C}$ if for every graph class $\mathcal{C}$ it holds that $\mathcal{C} \in \mathbb{C}$ iff $\mathcal{C}$ is bounded by $\lambda$. 
\end{definition}

Let us say a set of graph classes $\mathbb{C}$ is complete if for every graph $G$ there exists a $\mathcal{C} \in \mathbb{C}$ such that $G \in \mathcal{C}$. 

\begin{theorem}
    Let $\mathbb{C}$ be a complete set of graph classes closed under union and subsets with $\mathcal{G} \notin \mathbb{C}$. If there exists a countable subset of $\mathbb{C}$ such that its closure under subsets equals $\mathbb{C}$ then there exists a natural graph parameter that characterizes $\mathbb{C}$.
    \label{thm:parmchr}
\end{theorem}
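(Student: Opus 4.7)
The plan is to build $\lambda$ from a countable cofinal chain in $\mathbb{C}$. Let $\{\mathcal{C}_1, \mathcal{C}_2, \dots\} \subseteq \mathbb{C}$ be the countable subset whose closure under subsets equals $\mathbb{C}$, and set $\mathcal{D}_k = \mathcal{C}_1 \cup \dots \cup \mathcal{C}_k$. Each $\mathcal{D}_k$ lies in $\mathbb{C}$ by closure under union, and since $\mathcal{C}_i \subseteq \mathcal{D}_i$ the closure of $\{\mathcal{D}_k\}_{k \in \N}$ under subsets is again all of $\mathbb{C}$. So one obtains an increasing chain $\mathcal{D}_1 \subseteq \mathcal{D}_2 \subseteq \dots$ such that a graph class $\mathcal{C}$ belongs to $\mathbb{C}$ iff $\mathcal{C} \subseteq \mathcal{D}_k$ for some $k$.

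Next, I would define the candidate parameter by
$$ \lambda(G) = \min\{ k \in \N \mid G \in \mathcal{D}_k \}. $$
This is total: by completeness of $\mathbb{C}$, every $G$ lies in some $\mathcal{C} \in \mathbb{C}$, which is in turn contained in some $\mathcal{D}_k$, so $G \in \mathcal{D}_k$. It is also well-defined as a graph parameter since the $\mathcal{D}_k$ are graph classes (closed under isomorphism), so $\lambda$ only depends on the isomorphism type.

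Then I would verify the characterization directly from the chain. If $\mathcal{C} \in \mathbb{C}$, pick $k$ with $\mathcal{C} \subseteq \mathcal{D}_k$; then $\lambda(G) \leq k$ for every $G \in \mathcal{C}$, so $\mathcal{C}$ is bounded by $\lambda$. Conversely, if $\lambda(G) \leq k$ for all $G \in \mathcal{C}$, then $\mathcal{C} \subseteq \mathcal{D}_k \in \mathbb{C}$, and closure of $\mathbb{C}$ under subsets gives $\mathcal{C} \in \mathbb{C}$.

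Finally, naturality uses the assumption $\mathcal{G} \notin \mathbb{C}$: if the image of $\lambda$ were finite, some $k_0$ would upper-bound $\lambda$ everywhere, forcing $\mathcal{G} \subseteq \mathcal{D}_{k_0}$ and hence $\mathcal{G} = \mathcal{D}_{k_0} \in \mathbb{C}$, a contradiction. The proof is essentially mechanical once the chain $\{\mathcal{D}_k\}$ is in place; the only point worth double-checking is the reduction from the arbitrary countable generator to a monotone chain, which is immediate from $\mathcal{C}_i \subseteq \mathcal{D}_i$ combined with closure under union and subsets.
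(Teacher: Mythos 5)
Your proof is correct and essentially identical to the paper's: your $\lambda(G)=\min\{k \mid G\in\mathcal{D}_k\}$ coincides with the paper's $\min\{i \mid G\in\mathcal{C}_i\}$, and your chain $\mathcal{D}_1\subseteq\mathcal{D}_2\subseteq\dots$ is exactly the paper's $\mathcal{C}^\lambda_{\leq i}$, whose membership in $\mathbb{C}$ the paper establishes by an induction that you replace with a direct appeal to closure under finite union. One small point in your favor: you explicitly verify naturality (infinite image of $\lambda$) from the hypothesis $\mathcal{G}\notin\mathbb{C}$, a step the paper's proof leaves implicit even though the theorem asserts the parameter is natural.
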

\begin{proof}
    Let $\mathbb{C}$ be a  set of graph classes that satisfies the above premises and $\mathbb{C}' = \{ \mathcal{C}_1, \mathcal{C}_2, \dots \}$ is the needed countable subset of $\mathbb{C}$. Let $\lambda(G)$ be the minimal $i \geq 1$ such that $G \in \mathcal{C}_i$. Since $\mathbb{C}$ is complete it follows that $\mathbb{C}'$ is complete and thus $\lambda$ is total. 
    Let us define $\mathcal{C}^\lambda_{\leq i}$ as $\set{G \in \mathcal{G}}{ \lambda(G) \leq i}$ and similarly $\mathcal{C}^\lambda_{= i}$. It follows that a class $\mathcal{C}$ is bounded by $\lambda$ iff $\mathcal{C} \subseteq \mathcal{C}^\lambda_{\leq i}$ for some $i \in \N$.
    We now argue that $\lambda$ characterizes $\mathbb{C}$. 
    
    If $\mathcal{C} \in \mathbb{C}$ then there exists an $i \in \N$ such that $\mathcal{C} \subseteq \mathcal{C}_i$. It follows that $\mathcal{C} \subseteq \mathcal{C}^\lambda_{\leq i}$. We show the other direction by induction: if $\mathcal{C} \subseteq \mathcal{C}_{\leq i}^{\lambda}$ then $\mathcal{C} \in \mathbb{C}$ for all $i \in \N$. For $i=1$ it holds that $\mathcal{C} \subseteq  \mathcal{C}_{\leq 1}^{\lambda} = \mathcal{C}_{= 1}^{\lambda} = \mathcal{C}_{1}$. Since $\mathbb{C}$ is closed under subsets it follows that $\mathcal{C} \in \mathbb{C}$. For $i+1$ it holds that $\mathcal{C} \subseteq \mathcal{C}_{\leq i+1}^{\lambda}$ and $\mathcal{C}_{\leq i+1}^{\lambda} = \mathcal{C}_{\leq i}^{\lambda} \cup \mathcal{C}_{= i + 1}^{\lambda}$. By induction hypothesis it follows that $\mathcal{C}_{\leq i}^{\lambda} \in \mathbb{C}$. Since $\mathbb{C}$ is closed under union it remains to argue that $\mathcal{C}_{= i + 1}^{\lambda}$ is in $\mathbb{C}$. This follows by the observation  $\mathcal{C}_{= i + 1}^{\lambda} \subseteq \mathcal{C}_{i+1}$ and $\mathcal{C}_{i+1} \in \mathbb{C}$.   
\end{proof}

Let us examine the premises of Theorem \ref{thm:parmchr} with respect to the class of graph classes that we consider. Every class of the form $\ccg \cdot$ is closed under subsets and for a lot of complexity classes $\ComplexityFont{A}$ it also holds that $\ccg \ComplexityFont{A}$ is closed under union. For completeness a lookup table can be constructed for every singleton graph class. The required countable subset is given by the languages of $\ComplexityFont{A}$.  
In fact, every class of the form $\ccg \cdot$ mentioned in this paper satisfies these premises and therefore has a parameter characterization with the only exception being the class $\ccg \ComplexityFont{ALL}$, which provably has no parameter characterization. Assume $\lambda$ is a characterizing parameter for $\ccg \ComplexityFont{ALL}$ and let $A = \{ \mathcal{C}^{\lambda}_{\leq i} \mid i \in \N \}$. It must hold that for every graph class $\mathcal{C} \in \ccg \ComplexityFont{ALL}$ that it is a subset of some graph class in $A$. However, the diagonalization graph class $\mathcal{C}_A$ of $A$ cannot be a subset of any graph class in $A$ but has a labeling scheme and thus is in $\ccg \ComplexityFont{ALL}$, contradiction.

Consider the algorithmic relevance of such characterizations. Let $P \colon \mathcal{G} \rightarrow \{0,1\}$ be a graph property such as having a Hamiltonian cycle and $\lambda$ is a graph parameter that characterizes the class $\ccg \ComplexityFont{A}$. Assume that $P$ can be decided in time $n^{f(k)}$ on input $G$ with $k = \lambda(G)$ for some computable function $f \colon \N \rightarrow \N$. This can also be stated as $P$ parameterized by $\lambda$ being in the complexity class $\XP$. Then it follows that the property $P$ can be decided in polynomial time on every graph class in $\ccg \ComplexityFont{A}$.
The contra-position of this argument can be used to show that a graph class $\mathcal{C}$ is probably not in $\ccg \ComplexityFont{A}$: if it is $\NP$-hard to decide the property $P$ on a graph class $\mathcal{C}$ then this implies that $\mathcal{C}$ cannot be in $\ccg \ComplexityFont{A}$ unless $\P = \NP$.

Of course, the characterizing parameter derived from the proof of Theorem \ref{thm:parmchr} is not suitable for direct analysis but guarantees existence of such a characterization. However, there is room for different parameter characterizations of the same class as the following equivalence notion shows.
For two graph parameters $\lambda_1,\lambda_2$ let us say that $\lambda_2$ bounds $\lambda_1$, in symbols $\lambda_1 \leq \lambda_2$, if every graph class $\mathcal{C}$ that is bounded by $\lambda_1$ is also bounded by $\lambda_2$. 
If $\lambda_1 \leq \lambda_2$ and $\lambda_2 \leq \lambda_1$  we say $\lambda_1$ and $\lambda_2$ are equivalent.
For example, the maximum degree is bounded by clique number but not vice versa. 
\begin{fact}
    Let $\mathbb{C}_1, \mathbb{C}_2$ be two classes of graph classes and $\lambda_1,\lambda_2$ are respective  characterizing graph parameters.
    $\mathbb{C}_1 \subseteq \mathbb{C}_2$ iff $\lambda_1 \leq \lambda_2$.  
\end{fact}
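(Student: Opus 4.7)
The plan is to prove both directions by directly unfolding the two definitions at play: characterization (for any graph class $\mathcal{C}$, membership $\mathcal{C} \in \mathbb{C}_i$ is equivalent to $\mathcal{C}$ being bounded by $\lambda_i$) and $\leq$ on graph parameters (every $\lambda_1$-bounded class is also $\lambda_2$-bounded). Since both notions are universally quantified over all graph classes, the equivalence should fall out essentially by chaining implications.

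For the forward direction, I would assume $\mathbb{C}_1 \subseteq \mathbb{C}_2$ and pick an arbitrary graph class $\mathcal{C}$ bounded by $\lambda_1$. By the characterization of $\mathbb{C}_1$ this yields $\mathcal{C} \in \mathbb{C}_1$, hence $\mathcal{C} \in \mathbb{C}_2$ by the inclusion, and then the characterization of $\mathbb{C}_2$ gives that $\mathcal{C}$ is bounded by $\lambda_2$. This is exactly $\lambda_1 \leq \lambda_2$. The backward direction is completely symmetric: assume $\lambda_1 \leq \lambda_2$ and take $\mathcal{C} \in \mathbb{C}_1$; characterization of $\mathbb{C}_1$ gives that $\mathcal{C}$ is $\lambda_1$-bounded, the hypothesis upgrades this to $\lambda_2$-bounded, and characterization of $\mathbb{C}_2$ places $\mathcal{C}$ in $\mathbb{C}_2$.

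There is no real obstacle here; the proof is just diagram-chasing through the definitions. If anything, the only thing worth double-checking is that the definition of $\lambda_1 \leq \lambda_2$ is stated in terms of bounding arbitrary graph classes (not in terms of pointwise inequality on graphs), which is exactly how it is defined in the paper just above the fact. Because of this alignment, the two relations $\subseteq$ on classes of graph classes and $\leq$ on characterizing parameters are literally two encodings of the same information, and the fact is essentially a restatement. I would present the argument as a short two-paragraph proof, one paragraph per direction, with no auxiliary lemmas required.
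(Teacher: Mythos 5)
Your proof is correct: both directions follow by chaining the definition of ``characterizes'' with the definition of $\leq$ on parameters, exactly as you describe. The paper states this as a fact without proof precisely because it is immediate from the definitions, and your two-paragraph unfolding is the intended argument.
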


It follows that two graph parameters are equivalent iff they characterize the same class of graph classes. 
For a complexity class $\ComplexityFont{A}$ let $\lambda_\ComplexityFont{A}$ be a characterizing graph parameter thereof. Hence, comparing the containment relation of two classes $\ccg \ComplexityFont{A}$ and $\ccg \ComplexityFont{B}$ is the same as examining whether $\lambda_\ComplexityFont{A}$ bounds $\lambda_\ComplexityFont{B}$ or vice versa.
The interval number $\lambda_{\mathrm{Intv}}(G)$ of a graph $G$ is the smallest number $k \in \N$ such that $G$ is a $k$-interval graph, see~\cite{downey}. 
From this perspective some of our results can be stated as:
$$ 
\lambda_{\mathrm{Intv}} \lneq 
\lambda_{\FO_{\mathrm{qf}}(\ltp)}   \leq 
\lambda_{\ComplexityFont{L}} \leq  
\lambda_{\P}  \leq 
\lambda_{\EXP} \lneq   
\lambda_{\ComplexityFont{2}\EXP} 
\lneq   
\dots
\lneq   
\lambda_{\R}  
$$
where $\lambda \lneq \lambda'$ means strict containment, i.e.~$\lambda \leq \lambda'$ holds and $\lambda' \leq \lambda$ does not hold. The class $\FO_{\mathrm{qf}}(\ltp)$ is introduced in the next section. 
  
\section{First-Order Definable Label Decoders}
For a given small, hereditary graph class there is no obvious way of showing that this class is not contained in $\ccg \P$ or even $\ccg \ComplexityFont{L}$ as the fact that the IGC still stands open has shown. As a consequence, it is reasonable to look at a more restrictive model of computation for label decoders. From a complexity-theoretic view the circuit class $\AC^0$ is probably  among the first candidates. In this case uniformity issues have to be considered, i.e.~the complexity of an algorithm computing the circuits for each input
length. The strongest uniformity condition, which is the most suitable for lower bounds, leads to the class $\FO_{\mathrm{D}}$ from descriptive complexity defined in terms of first-order logic \cite{immerman}.  However, the domain of discourse in this setting would be the positions of the labels, which is arguably not the most natural choice. Instead we propose the domain to be polynomially many natural numbers and a label consists of a constant number of elements of this domain.
In this setting the labeling scheme for interval graphs can be stated as the formula $\varphi(x_1,x_2,y_1,y_2) = \neg ( x_2 < y_1 \vee  y_2 < x_1 )$; compare this with the example given in the first section. It is also possible to describe $k$-interval graphs or any hereditary graph class with linearly many edges such as bounded arboricity graphs with such formulas.

For $n \geq 1$ let $\mathcal{N}_n$ be the structure that has $[n]$ as universe, the order relation $\ltp$ on $[n]$ and addition as well as multiplication defined as functions:
$$\addp(x,y) = \begin{cases}
x+y &, \text{if } x+y \leq n \\
1 &,   \text{if } x+y > n \\
\end{cases}
 \: \: , \: \: 
 \mulp(x,y) = \begin{cases}
 xy &, \text{if } xy \leq n \\
 1 &,   \text{if } xy > n \\
 \end{cases}
$$
For $\sigma \subseteq \{ \ltp, \addp, \mulp \}$ let $\FO_k(\sigma)$ be the set of first-order formulas with boolean connectives $\neg,\vee,\wedge$, quantifiers $\exists,\forall$ and  $k$ free variables using only equality and the relation and function symbols from $\sigma$. For $\sigma = \{\ltp, \addp, \mulp \}$ we simply write $\FO_k$. Let $\Vars(\varphi)$ be the set of free variables in $\varphi$. Given $\varphi \in \FO_k(\sigma)$, $\Vars(\varphi) = (x_1,\dots,x_k)$ and an assignment $a_1,\dots,a_k \in [n]$ we write $\mathcal{N}_n, (a_1,\dots,a_k) \models \varphi$ if the interpretation $\mathcal{N}_n, (a_1,\dots,a_k)$ satisfies $\varphi$ under the usual semantics of first-order logic. 

\begin{definition}
    A (quantifier-free) logical labeling scheme is a tuple $S=(\varphi,c)$ with a (quantifier-free) formula $\varphi \in \FO_{2k}$ and $c,k \in \N$. A $(c,k)$-labeling for a set $V$ is a function $\ell \colon V \rightarrow [n^c]^k$ and induces the graph $G_S^\ell$ with vertex set $V$ and edges $(u,v)$ if $\mathcal{N}_{n^c}, (\ell(u),\ell(v)) \models \varphi$.     
    Then a graph $G$ is in $\gr{S}$ if there exists a $(c,k)$-labeling $\ell$ for $V(G)$ such that $G = G_S^\ell$.
    \label{def:lls}
\end{definition}
\begin{definition}
    Let $\sigma \subseteq \{ \ltp, \addp, \mulp \}$, $c,k \in \N$. A graph class $\mathcal{C}$ is in $\ccg[c,k]\FO(\sigma)$ if there exists a logical labeling scheme $(\varphi,c)$ with $\varphi \in \FO_{2k}(\sigma)$ such that $\mathcal{C} \subseteq \gr{\varphi,c}$. And $\ccg \FO(\sigma) = \cup_{c,k \in \N} \ccg[c,k] \FO(\sigma)$. Let $\ccg\FO_{\mathrm{qf}}(\sigma)$ denote the quantifier-free analogue.
\end{definition}

Notice, $k$ numbers in $[n^c]$ can be encoded as string of length $ck\log n$.
A logical labeling scheme can for instance express a system of polynomial inequalities on $2k$ variables and adjacency is determined by whether this system is satisfied when plugging in the values for two vertices. By disallowing multiplication these systems become linear. Quantified variables can be used to incorporate unknowns. For example, $\varphi(x,y) = \exists z : x \times z^2 = y $ means that there is an edge from $u$ to $v$ with labels $x_u$, $y_v$ if  $y_v$ can be written as product of $x_u$ and a square number. 

\begin{theorem}
    $\ccg \FO \subseteq \ccg \PH$ and $\ccg \FO_{\mathrm{qf}} \subseteq \ccg \ComplexityFont{L}$.
\end{theorem}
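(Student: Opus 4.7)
The plan is to translate each logical labeling scheme $(\varphi,c)$ with $\varphi\in\FO_{2k}(\sigma)$ into an equivalent ordinary labeling scheme $(F_L,ck)$ and then to bound the complexity of the inducing language $L$ separately in the two cases. Given a word $xy$ with $|x|=|y|=ck\log n$, the decoder will recover $n$ from the length, split each half into $k$ binary blocks of $c\log n$ bits encoding numbers $x_1,\dots,x_k,y_1,\dots,y_k\in[n^c]$, and put $xy\in L$ exactly when $\mathcal{N}_{n^c},(x_1,\dots,x_k,y_1,\dots,y_k)\models\varphi$. By construction $\gr{F_L,ck}\supseteq\gr{\varphi,c}$, so both inclusions reduce to complexity bounds on $L$.

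For the quantifier-free inclusion I would use that $\varphi$ is a fixed boolean combination of a constant number of atomic formulas whose terms are built from $\addp$ and $\mulp$ applied to $c\log n$-bit numbers. The clamp-to-$1$ overflow rule in the definition of $\mathcal{N}_n$ keeps every intermediate value within $c\log n$ bits, and addition and multiplication of two $\mathcal{O}(\log n)$-bit integers are logspace-computable, so a fixed-depth term can be evaluated in $\ComplexityFont{L}$ by composing these routines along its syntax tree. Combining the resulting atomic truth values with the propositional skeleton of $\varphi$ stays in $\ComplexityFont{L}$, yielding $L\in\ComplexityFont{L}$ and hence $\ccg\FO_{\mathrm{qf}}\subseteq\ccg\ComplexityFont{L}$.

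For the full $\FO$ inclusion I would exploit that $\varphi$ has a constant number $q$ of quantifier alternations and that each quantified variable ranges over $[n^c]$, whose elements are encodable in $c\log n$ bits, which is polynomial in the input length $2ck\log n$. A $\Sigma_q^p$ or $\Pi_q^p$ machine can therefore alternate guessing and verifying blocks of witnesses and finally invoke the logspace evaluator from the previous paragraph on the quantifier-free matrix, placing $L$ inside $\PH$ and giving $\ccg\FO\subseteq\ccg\PH$.

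The main obstacle is verifying that nested arithmetic over $\mathcal{N}_{n^c}$ really does stay in logspace; once the overflow clamping is handled and the constant depth of terms is exploited, the remainder of both inclusions is a textbook translation from first-order evaluation over a polynomial-size universe to the polynomial hierarchy. A minor bookkeeping issue, namely that the convention $\log n=\lceil\log_2 n\rceil$ does not uniquely determine $n$ from a word of length $2ck\log n$, is absorbed by letting the decoder pick any canonical $n$ consistent with that length, which affects neither the definability of $L$ nor its complexity.
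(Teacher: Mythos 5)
Your proposal is correct and follows essentially the same route as the paper's proof sketch: evaluate the constant-depth arithmetic terms on $\mathcal{O}(\log n)$-bit numbers (the paper does this inside a $\TC^0$ circuit and then invokes $\TC^0 \subseteq \ComplexityFont{L}$, whereas you compose logspace subroutines directly, which amounts to the same thing), plug the atomic truth values into the propositional skeleton, and in the quantified case use the alternation of $\PH$ to guess the linearly-long witness blocks. The one point to tighten is your closing remark: picking a ``canonical'' $n$ from the label length is not harmless in the presence of $\addp$ and $\mulp$, because the clamping threshold in $\mathcal{N}_{n^c}$ depends on $n$ itself, so evaluating over $\mathcal{N}_{N^c}$ for some other $N$ consistent with the same length can flip atoms and hence change the induced graph --- the issue is correctness of the representation, not the definability or complexity of $L$; the clean fix is to reserve a block of each label that encodes $n$ explicitly, the resulting constant-factor increase in label length being absorbed by the union over all label-length constants in the definition of $\ccg \ComplexityFont{A}$. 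With that amendment both inclusions go through exactly as you describe.
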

\begin{proof}[Proof sketch]
    It is known that the circuit class $\TC^0 \subseteq  \ComplexityFont{L}$ (assuming logspace-uniformity or stronger) and therefore $\ccg \TC^0 \subseteq \ccg\ComplexityFont{L}$~\cite{vollmer}. We argue that $\ccg\FO_{\mathrm{qf}} \subseteq \ccg \TC^0$. Given a logical labeling scheme $(\varphi,c)$ with $\varphi \in \FO_{2k}$ the label length in a graph with $n$ vertices is $ck \log n$. The $\TC^0$-circuit has $2ck \log n$ input bits and every block of $c\log n$ bits corresponds to the value of a free variable in $\varphi$. Every term in $\varphi$ can be evaluated by implementing its syntax tree as part of the circuit since addition and multiplication can be computed in $\TC^0$. The overflow condition, i.e.~if the result is larger than $n^c$,  has to be checked. Then for every atomic formula in $\varphi$ it remains to test for equality or less than of the input terms. 
    After replacing every atomic formula in $\varphi$ by its truth value the formula becomes a propositional formula that can be seen as circuit since it is quantifier-free. If $\varphi$ contains quantifiers assume that it is in prenex normal form, i.e.~$\varphi = Q_1 z_1 \dots Q_{q} z_{q} \psi(x_1,\dots,x_{2k},z_1,\dots,z_k)$ where $Q_i \in \{\exists, \forall \}$ and $\psi$ is a quantifier-free formula. The values for $x_1,\dots,x_{2k}$ are determined by the input string and the value of a variable $z_i$ corresponds to a binary word of length $k \log n$, which is linear in the size of the input string. Using the non-determinism of the polynomial-time hierarchy the values of the $z_i$'s can be ``guessed'' and then evaluated using the $\TC^0$-circuit described before, which can be simulated in polynomial time.        
\end{proof}

Indeed, all of the graph classes mentioned in the beginning of this section are already contained in $\ccg \FO_\mathrm{qf}(\ltp)$. Therefore let us consider this class more closely.
\begin{fact}
    The interval number $\lambda_{\mathrm{Intv}}$ is strictly bounded by a graph parameter that characterizes $\ccg\FO_{\mathrm{qf}}(\ltp)$.
    \label{fact:intv}
\end{fact}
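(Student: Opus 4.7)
The plan is to split the claim into two parts: first, establishing the existence of a graph parameter $\lambda$ characterizing $\ccg\FO_{\mathrm{qf}}(\ltp)$, and second, proving $\lambda_{\mathrm{Intv}} \lneq \lambda$. For the first, I would apply Theorem~\ref{thm:parmchr} to $\ccg\FO_{\mathrm{qf}}(\ltp)$. Closure under subsets is immediate. Closure under union is obtained by reserving two coordinates as a ``side indicator'' where side-$1$ vertices receive values $(1,2,\dots)$ and side-$2$ vertices receive $(2,1,\dots)$, so that the condition ``both vertices belong to side $i$'' is expressible by comparisons of the first two coordinates; one then takes the disjunction of the relativized formulas of the two input schemes (recall that $=$ is quantifier-free expressible as $\neg(x<y)\wedge\neg(y<x)$). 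Completeness follows from representing every individual graph $G$ on $n_0$ vertices: assign each edge a unique marker, label every vertex $v$ with the tuple of markers of its incident edges padded with fresh per-vertex dummy values drawn from a disjoint slice of the universe, and use the formula $\bigvee_{s,t} x_s = y_t$; two vertex-labels then share a coordinate exactly when they share an incident edge. A countable dense subfamily is provided by enumerating the pairs $(\varphi,c)$. Theorem~\ref{thm:parmchr} then yields a characterizing parameter $\lambda := \lambda_{\FO_{\mathrm{qf}}(\ltp)}$.

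For $\lambda_{\mathrm{Intv}} \leq \lambda$, I would fix $m\geq 1$ and show that every $m$-interval graph admits a quantifier-free $\FO(\ltp)$ labeling scheme. Given an $m$-interval representation of $G$ on $n$ vertices, normalize the endpoints to integers in $[n^c]$ and label each vertex $v$ by $(a_1^v,b_1^v,\dots,a_m^v,b_m^v)$. Adjacency is then captured by
$$\varphi(\vec x,\vec y) \;=\; \bigvee_{i,j \in [m]} \bigl(\neg(b_j^v < a_i^u) \wedge \neg(b_i^u < a_j^v)\bigr),$$
which says that some interval of $u$ meets some interval of $v$. Hence $m$-interval graphs lie in $\ccg[c,2m]\FO_{\mathrm{qf}}(\ltp)$, giving $\lambda_{\mathrm{Intv}}\leq\lambda$.

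For strictness I would exhibit a class in $\ccg\FO_{\mathrm{qf}}(\ltp)$ of unbounded interval number. Take $\mathcal{K}=\{K_{n,m}\mid n,m\geq 1\}$: label one side with~$1$ and the other with~$2$, and use $\varphi(x,y)=(x<y)\vee(y<x)$, placing $\mathcal{K}$ in $\ccg[1,1]\FO_{\mathrm{qf}}(\ltp)$. An elementary counting argument shows $i(K_{n,n})\geq n/4$: since each side is an independent set, the supports on the line of distinct same-side vertices are pairwise disjoint, so the common refinement of the two resulting partitions has at most $4nk$ pieces (if $k$ is the interval number), while each of the $n^2$ cross-pairs demands a piece where the corresponding $A$-vertex and $B$-vertex are both active. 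Hence $\mathcal{K}$ is not bounded by $\lambda_{\mathrm{Intv}}$, so $\lambda\not\leq\lambda_{\mathrm{Intv}}$.

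The main obstacle I anticipate is verifying the premises of Theorem~\ref{thm:parmchr} in a way that truly stays within $\FO_{\mathrm{qf}}(\ltp)$: both the marker-and-padding encoding for completeness and the side-indicator construction for union closure must be laid out so that no accidental equalities creep in, which forces a careful choice of disjoint value slices for markers, dummies and indicators. Once those encodings are in place, the containment of $m$-interval graphs and the lower bound on $i(K_{n,n})$ combine to give the strict inequality.
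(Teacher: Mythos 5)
Your proof is correct, and it follows the same high-level decomposition as the paper (containment of $k$-interval graphs plus a witness class of unbounded interval number), but it differs in two substantive ways. First, you explicitly verify the premises of Theorem~\ref{thm:parmchr} for $\ccg\FO_{\mathrm{qf}}(\ltp)$ (union closure via a side indicator, completeness via edge markers), whereas the paper's proof of this fact simply reduces the statement to the two containment claims and obtains the characterizing parameter separately, later in the section, via the explicit parameter $\lambda_{\FO_{\mathrm{qf}}(\ltp)}(G) = \min\{k : \{G\} \in \ccg[k,k]\FO_{\mathrm{qf}}(\ltp)\}$ and Lemma~\ref{lem:degn}; your extra legwork is sound (your union construction is a guard-based variant of the one in Lemma~\ref{lem:degn}, and your edge-marker encoding is the standard bounded-arboricity labeling the paper alludes to), though you should also note the premise $\mathcal{G} \notin \ccg\FO_{\mathrm{qf}}(\ltp)$, which holds by the usual counting argument since every $\gr{\varphi,c}$ is small. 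Second, your witness for strictness is genuinely different: you use $\mathcal{K} = \{K_{n,m}\}$ with the one-coordinate scheme $\varphi(x,y) = (x<y) \vee (y<x)$ and a piece-counting lower bound $i(K_{n,n}) = \Omega(n)$ (a classical fact), whereas the paper uses the family of Figure~\ref{fig:family} (an $i$-fold bouquet of $4$-cycles) and argues the hub vertex needs $i$ intervals. Your witness is arguably cleaner, since it lands in $\ccg[1,1]\FO_{\mathrm{qf}}(\ltp)$ and the counting argument is fully elementary; the paper's witness has the incidental feature of being planar. Both routes establish the fact.
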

\begin{proof}
    This statement is equivalent to saying that $k$-interval graphs are contained in $\ccg\FO_{\mathrm{qf}}(\ltp)$ and there exists a graph class $\mathcal{C} \in \ccg\FO_{\mathrm{qf}}(\ltp)$ that is no subclass of $k$-interval graphs for all $k \geq 1$.
    The containment of $k$-interval graphs in $\ccg\FO_{\mathrm{qf}}(\ltp)$ for every $k$ follows by translating its geometrical representation into a logical labeling scheme as we have done for interval graphs previously. Consider the family of graphs shown in Figure~\ref{fig:family} where $G_{i+1}$ is obtained by appending a new 4-cycle to $G_i$.
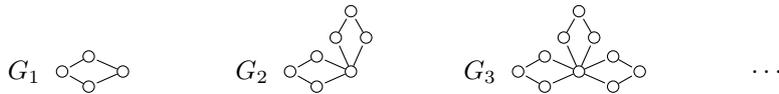
\begin{figure}[b]
    \centering

        \begin{tikzpicture}[shorten >=1pt,auto,node distance=0.1cm,inner sep=0pt,minimum size=0.15cm,
        main node/.style={circle,draw}]

                        \node (x) at (2.5,0) {$\dots$};
        
            \node (a) at (-0.8-0.5,0) {$G_3$};            
        
            \node[main node] (a1) at (0,0) {};
            \node[main node] (a2) at (0.2,0.45) {};
            \node[main node] (a3) at (-0.2,0.45) {};
            \node[main node] (a4) at (0,0.8) {};
            
            \node[main node] (a5) at (0.45,0.2) {};
            \node[main node] (a6) at (0.45,-0.2) {};            
            \node[main node] (a7) at (0.8,0) {};         
            
            \node[main node] (a8) at (-0.45,0.2) {};
            \node[main node] (a9) at (-0.45,-0.2) {};            
            \node[main node] (a10) at (-0.8,0) {};              
        \path[-]
        (a1) edge (a2)
        (a1) edge (a3)
        (a2) edge (a4)
        (a3) edge (a4)
        
        (a1) edge (a5)
        (a1) edge (a6)
        (a5) edge (a7)
        (a6) edge (a7)      
        
        (a1) edge (a8)
        (a1) edge (a9)
        (a8) edge (a10)
        (a9) edge (a10)                     
        ;

            \node[main node] (b1) at (0-3,0) {};
            \node[main node] (b2) at (0.2-3,0.45) {};
            \node[main node] (b3) at (-0.2-3,0.45) {};
            \node[main node] (b4) at (0-3,0.8) {};

            \node[main node] (b8) at (-0.45-3,0.2) {};
            \node[main node] (b9) at (-0.45-3,-0.2) {};            
            \node[main node] (b10) at (-0.8-3,0) {};    
            
            \node (b) at (-0.8-3-0.5,0) {$G_2$};            
            
        \path[-]
        (b1) edge (b2)
        (b1) edge (b3)
        (b2) edge (b4)
        (b3) edge (b4)
        
        (b1) edge (b8)
        (b1) edge (b9)
        (b8) edge (b10)
        (b9) edge (b10)                     
        ;

            \node[main node] (c1) at (0-6,0) {};
            \node[main node] (c8) at (-0.45-6,0.2) {};
            \node[main node] (c9) at (-0.45-6,-0.2) {};            
            \node[main node] (c10) at (-0.8-6,0) {};                
            \node (c) at (-0.8-6-0.5,0) {$G_1$};
            
        \path[-]
        
        (c1) edge (c8)
        (c1) edge (c9)
        (c8) edge (c10)
        (c9) edge (c10)                     
        ;              
        
        \end{tikzpicture}
        \caption{A family of graphs with unbounded interval number}
        \label{fig:family}                          
\end{figure}      
    Then the class $\{ G_i \mid i \in \N \}$ lies in $\ccg\FO_{\mathrm{qf}}(\ltp)$ but can be verified to have unbounded interval number. This follows from the observation that the vertex with maximal degree in $G_i$ cannot be represented with $i-1$ intervals.
\end{proof}

A natural question is how do $c$ and $k$ affect the expressiveness of $\ccg[c,k]\FO_{\mathrm{qf}}(\ltp)$. Non-surprisingly, increasing $k$ strictly enhances the graph classes that can be represented as we will see in a moment. The parameter $c$ determines how large a number stored in a label can be, i.e.~at most $n^c$. In fact, $c$ is degenerate in the sense that it can be bounded in terms of $k$. It would be surprising if the same holds in the presence of addition.
\begin{lemma}
    $\ccg[c,k]\FO_{\mathrm{qf}}(\ltp) \subseteq \ccg[k,k]\FO_{\mathrm{qf}}(\ltp)$ for all $c,k \geq 1$.    
    \label{lem:ckkk}
\end{lemma}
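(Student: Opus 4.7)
The plan is to exploit that a quantifier-free formula in $\FO(\ltp)$ depends only on the order type of its arguments, so any $(c,k)$-labeling can be compressed into a $(k,k)$-labeling by replacing each coordinate with its rank among the values actually used. The case $c \leq k$ is immediate: a $(c,k)$-labeling already lives in $[n^k]^k$, and because $\varphi$ only uses the relations $\ltp$ and $=$, its truth on a fixed tuple does not depend on whether one interprets it in $\mathcal{N}_{n^c}$ or $\mathcal{N}_{n^k}$, so the same scheme $(\varphi,k)$ represents the same graphs.

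For the main case $c > k$, I would fix an arbitrary $G \in \gr{\varphi,c}$ on $n$ vertices witnessed by a $(c,k)$-labeling $\ell \colon V(G) \to [n^c]^k$. Let $S_\ell \subseteq [n^c]$ be the set of all coordinate values occurring in $\ell$, so $|S_\ell| \leq nk$, and let $\rho \colon S_\ell \to [|S_\ell|]$ be the order-preserving bijection. Define the compressed labeling $\ell'(v) := (\rho(\ell(v)_1),\dots,\rho(\ell(v)_k))$. Because $\varphi$ is a Boolean combination of atoms of the form $x_i \ltp x_j$ and $x_i = x_j$, its truth on any $2k$-tuple depends only on the pattern of strict inequalities and equalities among the entries; since $\rho$ is order-preserving this pattern is preserved, giving $\mathcal{N}_{n^c},(\ell(u),\ell(v)) \models \varphi$ iff $\mathcal{N}_{n^k},(\ell'(u),\ell'(v)) \models \varphi$ for all $u,v \in V(G)$. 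Hence $\ell'$ realizes the same graph $G$ under the scheme $(\varphi,k)$.

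What remains is to check that $\ell'$ actually maps into $[n^k]^k$, which reduces to the elementary inequality $nk \leq n^k$; this is trivial for $n = 1$ or $k = 1$ and follows by an easy induction on $k$ for $n,k \geq 2$. The main obstacle, if any, is conceptual rather than technical: one must confirm that the defining formula does not mention $\addp$ or $\mulp$, since for those the interpretation genuinely depends on the ambient universe via the overflow convention and the order-type argument would collapse. Under the hypothesis $\varphi \in \FO_{2k}(\ltp)$ this is automatic, so the compression works uniformly across all $n$ and yields $\ccg[c,k]\FO_{\mathrm{qf}}(\ltp) \subseteq \ccg[k,k]\FO_{\mathrm{qf}}(\ltp)$.
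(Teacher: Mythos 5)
Your proof is correct and follows essentially the same route as the paper's: both compress a $(c,k)$-labeling by replacing each coordinate with its rank among the at most $kn$ values actually used (your $\rho$ is the paper's $\mathrm{ord}$), and both observe that a quantifier-free formula over $\ltp$ and $=$ depends only on the order type, which an order-preserving map leaves intact. The only nitpick is your side remark that $nk \leq n^k$ is ``trivial for $n=1$'' (it fails there for $k \geq 2$), but this is harmless since for $n=1$ the set of used values has size at most $n^c = 1$ anyway.
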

\begin{proof}
    Consider why it suffices for an interval graph on $n$ vertices to use only numbers between $1$ and $2n$ to represent the intervals. For the same reason it makes no difference for a quantifier-free formula $\varphi \in \FO_{2k}(\ltp)$ to be evaluated on a universe larger than $kn$ in the sense that a labeling $\ell \colon V(G) \rightarrow \N^k$ can be converted to a labeling $\ell' \colon V(G) \rightarrow [kn]^k$ such that adjacency is preserved. More precisely, a $(c,k)$-labeling $\ell$ for a vertex set $V$ can be transformed into a $(k,k)$-labeling $\ell'$ such that $G_{(\varphi,c)}^\ell = G_{(\varphi,k)}^{\ell'}$ holds for every quantifier-free formula $\varphi \in \FO_{2k}(\ltp)$. Let $n = |V|$ be the number of vertices. Since $k$ numbers are assigned to each vertex there are at most $kn$ numbers in $A= \{ x_i \mid u \in V, \ell(u) = (x_1,\dots,x_k), i \in [k] \}$. For an $a \in A$ let $\mathrm{ord}(a) = |\{ b \in A \mid b < a \}|+1$, i.e.~the number of numbers in $A$ that are smaller than $a$ plus one. For $u \in V(G)$ we define $\ell'(u)$ as follows. Let $\ell(u) = (x_1,\dots,x_k)$. Then $\ell'(u) = (\mathrm{ord}(x_1),\dots,\mathrm{ord}(x_l))$. Notice that the maximal value for a component of $\ell'(u)$ is $kn$. It remains to check that the truth value of $\varphi$ is invariant under this modified labeling, which follows from the fact that $x < y \Leftrightarrow \mathrm{ord}(x) < \mathrm{ord}(y)$.     
\end{proof}

A consequence of this is that a logical labeling scheme in $\ccg \FO_{\mathrm{qf}}(\ltp)$ is solely determined by its formula $\varphi$. Therefore we consider a quantifier-free formula $\varphi \in \FO_{2k}(\ltp)$ to be the logical labeling scheme $(\varphi,k)$ as well. To check whether a graph $G$ is in $\gr{\varphi}$ it suffices to find a labeling $\ell \colon V(G) \rightarrow \N^k$ with $2k = |\Vars(\varphi)|$ which can be regarded as $(c,k)$-labeling for a sufficiently large $c$. Stated differently, one does not need to worry about the numbers being polynomially bounded.

Also, it implies that for every $k$ there exists a $k' > k$ such that $\ccg[k,k]\FO_{\mathrm{qf}}(\ltp) \subsetneq \ccg[k',k']\FO_{\mathrm{qf}}(\ltp)$. Assume the opposite, then $\ccg  \FO_{\mathrm{qf}}(\ltp)$ collapses to $\ccg[k,k]\FO_{\mathrm{qf}}(\ltp)$. It follows that every graph class in $\ccg  \FO_{\mathrm{qf}}(\ltp)$ can be represented using $k^2 \log n$ bits and therefore has at most $\exp(k^2\log n)$ graphs on $n$ vertices, which obviously cannot be the case for any $k \in \N$. 

\begin{lemma}
    The graph class that is the union of every graph class in $\ccg[k,k]\FO_{\mathrm{qf}}(\ltp)$ is contained in $\ccg\FO_{\mathrm{qf}}(\ltp)$ for all $k \in \N$.
    \label{lem:degn}
\end{lemma}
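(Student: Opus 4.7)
The plan is to exploit that, for fixed $k$, only finitely many quantifier-free formulas in $\FO_{2k}(\ltp)$ are pairwise logically non-equivalent; hence only finitely many distinct graph classes of the form $\gr{\varphi,k}$ arise, and the union in the statement collapses to a finite union $\gr{\varphi_1,k} \cup \cdots \cup \gr{\varphi_N,k}$. I would then realise this finite union by a single quantifier-free logical labeling scheme using a few extra ``indicator'' coordinates in each label.

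First I would observe that the truth value of a quantifier-free $\varphi \in \FO_{2k}(\ltp)$ under an assignment depends only on the order type of the assignment (the pattern of $<$ and $=$ among the $2k$ values), because the only atoms are comparisons and equalities between variables, with no constants available. Since the number of order types on $2k$ elements is finite, there are only finitely many truth-table-distinct quantifier-free formulas in $\FO_{2k}(\ltp)$, and hence only finitely many distinct graph classes $\gr{\varphi,k}$. Pick representatives $\varphi_1,\ldots,\varphi_N$. Every $\mathcal{C} \in \ccg[k,k]\FO_{\mathrm{qf}}(\ltp)$ satisfies $\mathcal{C} \subseteq \gr{\varphi_i,k}$ for some $i$, so the union in question equals $\bigcup_{i=1}^N \gr{\varphi_i,k}$.

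Next I would construct a single $\psi \in \FO_{2(N+k)}(\ltp)$ that spans this union. Each vertex $v$ will carry $N+k$ numbers, namely indicators $a_1(v),\ldots,a_N(v)$ followed by the original $k$ coordinates $x_1(v),\ldots,x_k(v)$. Given $G \in \gr{\varphi_m,k}$ with labeling $\ell$, I extend it by setting $a_j(v)=j$ for $j\neq m$ and $a_m(v)=N+1$ at every $v \in V(G)$, so that at every vertex coordinate $a_m$ is the strict maximum among the indicators. The formula
\[
\psi \;=\; \bigvee_{m=1}^N \Bigl( \bigwedge_{j\neq m} \bigl(a_j(u)<a_m(u)\bigr) \;\wedge\; \bigwedge_{j\neq m} \bigl(a_j(v)<a_m(v)\bigr) \;\wedge\; \varphi_m\bigl(x(u),x(v)\bigr) \Bigr)
\]
is quantifier-free and expresses ``both endpoints declare the same class $m$, and then $\varphi_m$ holds on their original coordinates''. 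A direct check then shows $\gr{\varphi_m,k} \subseteq \gr{\psi,c}$ for any $c$ large enough to accommodate the constant indicators alongside the original labels.

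The main subtlety is that this union-closure argument must be carried out purely in the order-based, quantifier-free setting, where no constants are available to name the index $m$ directly. The ``strict maximum'' encoding circumvents this, since ``$a_m$ is the unique maximum of $a_1,\ldots,a_N$'' is a conjunction of $<$-atoms, making class membership first-order definable without constants or quantifiers. Any residual worry about the polynomial bound on label values can be dispensed with by invoking Lemma~\ref{lem:ckkk}; the remaining verification that the constructed labelings realise each $\gr{\varphi_m,k}$ inside $\gr{\psi,c}$ is routine bookkeeping.
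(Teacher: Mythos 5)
Your proposal is correct and takes essentially the same route as the paper's proof: first reduce the union to a finite one using the fact that there are only finitely many semantically distinct quantifier-free formulas in $\FO_{2k}(\ltp)$, then realise the finite union by a single quantifier-free scheme with extra indicator coordinates. The only (inessential) difference is the union gadget: the paper adds one extra coordinate per vertex and branches on whether the extra coordinates of $u$ and $v$ are equal, handling the union pairwise, whereas you add $N$ indicator coordinates per vertex and select the active formula via a strict-maximum test; both constructions are sound.
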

\begin{proof}
    We argue that $\ccg\FO_{\mathrm{qf}}(\ltp)$ is closed under finite union and that there exists only a finite number of labeling schemes in $\ccg[k,k]\FO_{\mathrm{qf}}(\ltp)$ such that they represent different graph classes. For closure under union consider two labeling schemes given by their quantifier-free formulas $\varphi,\psi \in \FO_{2k}(\ltp)$. Then the graph class given by the following formula with $2k+2$ variables contains the union of $\gr{\varphi}$ and $\gr{\psi}$:
    \begin{align*}
       &\big( x_{k+1} = x_{2k+2} \Rightarrow \varphi(x_1,\dots,x_k,x_{k+2},\dots,x_{2k+1}) \big) \wedge \\ &\big( x_{k+1} \neq x_{2k+2} \Rightarrow \psi(x_1,\dots,x_k,x_{k+2},\dots,x_{2k+1}) \big) 
    \end{align*}
    The second claim follows from the fact that there are only finitely many semantically different quantifier-free formulas in $\FO_{k}(\ltp)$ for every $k$. More precisely, there are at most $2k^2$ different atomic formulas (`$\ltp$' and `$=$') on $k$ variables and therefore at most $\exp^2(2k^2)$ semantically different formulas, which is the number of boolean functions on $2k^2$ variables.  
\end{proof}

\begin{definition}
    For a graph $G$ and $k \in \N$ we define the graph parameter $\lambda_{\FO_{\mathrm{qf}}(\ltp)}$ such that $\lambda_{\FO_{\mathrm{qf}}(\ltp)}(G) = k$ if $k$ is the minimal number with $\{G\} \in \ccg[k,k]\FO_{\mathrm{qf}}(\ltp)$.
\end{definition}

\begin{fact}
    The graph parameter $\lambda_{\FO_{\mathrm{qf}}(\ltp)}(G)$ characterizes $\ccg \FO_{\mathrm{qf}}(\ltp)$.
\end{fact}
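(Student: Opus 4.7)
The plan is to reduce both directions of the equivalence to Lemmas~\ref{lem:ckkk} and \ref{lem:degn} together with the trivial closure of classes of the form $\ccg\cdot$ under subsets. Before addressing either direction, I would verify that $\lambda_{\FO_{\mathrm{qf}}(\ltp)}$ is total, so that ``bounded by $\lambda_{\FO_{\mathrm{qf}}(\ltp)}$'' is a meaningful notion. For a graph $G$ on $n$ vertices, set $k=n$ and label each vertex $v_i$ by the $k$-tuple $(i,n_1^i,\dots,n_{k-1}^i)$ whose first coordinate is the index of $v_i$ and whose remaining coordinates list the indices of its neighbors, padded with copies of $i$ itself if the degree is smaller than $k-1$. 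The quantifier-free formula
$$\varphi(x_1,\dots,x_k,y_1,\dots,y_k) \;=\; (y_1 \neq x_1) \wedge \bigvee_{j=2}^{k}(y_1 = x_j)$$
then decides adjacency in $G$ correctly (the filter $y_1\neq x_1$ neutralises the padding), showing $\{G\}\in\ccg[n,n]\FO_{\mathrm{qf}}(\ltp)$ and hence that $\lambda_{\FO_{\mathrm{qf}}(\ltp)}(G)$ is defined for every $G$.

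For the forward direction, I would assume $\mathcal{C}\in\ccg\FO_{\mathrm{qf}}(\ltp)$, so that $\mathcal{C}\in\ccg[c,k]\FO_{\mathrm{qf}}(\ltp)$ for some $c,k$. Lemma~\ref{lem:ckkk} lifts this to $\mathcal{C}\in\ccg[k,k]\FO_{\mathrm{qf}}(\ltp)$, and closure of the latter under subsets gives $\{G\}\in\ccg[k,k]\FO_{\mathrm{qf}}(\ltp)$ for every $G\in\mathcal{C}$. Therefore $\lambda_{\FO_{\mathrm{qf}}(\ltp)}(G)\leq k$ for all $G\in\mathcal{C}$, i.e.\ $\mathcal{C}$ is bounded by $\lambda_{\FO_{\mathrm{qf}}(\ltp)}$.

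For the converse, assume $\lambda_{\FO_{\mathrm{qf}}(\ltp)}(G)\leq k$ for every $G\in\mathcal{C}$ and let $\mathcal{D}_k$ denote the union of all graph classes in $\ccg[k,k]\FO_{\mathrm{qf}}(\ltp)$. By the definition of $\lambda_{\FO_{\mathrm{qf}}(\ltp)}$, each $G\in\mathcal{C}$ lies in some class of $\ccg[k,k]\FO_{\mathrm{qf}}(\ltp)$, hence in $\mathcal{D}_k$, so $\mathcal{C}\subseteq\mathcal{D}_k$. Lemma~\ref{lem:degn} states $\mathcal{D}_k\in\ccg\FO_{\mathrm{qf}}(\ltp)$, and closure under subsets then yields $\mathcal{C}\in\ccg\FO_{\mathrm{qf}}(\ltp)$.

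The argument is essentially bookkeeping once the two preceding lemmas are in hand; the only step that requires a separate observation is totality of $\lambda_{\FO_{\mathrm{qf}}(\ltp)}$, which rests on the padding trick above. I do not expect any serious obstacle beyond being careful that the quantifier-free adjacency formula handles isolated vertices and the padding symbols correctly, which the $(y_1\neq x_1)$ conjunct accomplishes.
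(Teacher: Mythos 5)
Your proof is correct and follows essentially the same route as the paper: the forward direction is immediate from Lemma~\ref{lem:ckkk} and closure under subsets, and the converse reduces to Lemma~\ref{lem:degn} exactly as the paper does. Your explicit verification that $\lambda_{\FO_{\mathrm{qf}}(\ltp)}$ is total (via the padding labeling with adjacency formula $(y_1 \neq x_1) \wedge \bigvee_{j\geq 2}(y_1 = x_j)$) is a worthwhile addition that the paper leaves implicit, and it checks out.
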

\begin{proof}
    One direction is trivial: if $\mathcal{C}$ is in $\ccg \FO_{\mathrm{qf}}(\ltp)$ then it is bounded by $\lambda_{\FO_{\mathrm{qf}}(\ltp)}$. For the other direction let $\mathcal{C}$ be bounded by $\lambda_{\FO_{\mathrm{qf}}(\ltp)}$ meaning that there exists a $k$ such that for every $G \in \mathcal{C}$ it holds that $\lambda_{\FO_{\mathrm{qf}}(\ltp)}(G) \leq k$. Therefore $\mathcal{C}$ is a subset of the union of all graph classes in $\ccg[k,k]\FO_{\mathrm{qf}}(\ltp)$ which is in $\ccg \FO_{\mathrm{qf}}(\ltp)$ by Lemma \ref{lem:degn}.
\end{proof}

We remark that a similar construction using the label length does not yield a characterizing parameter for $\ccg \P$ or $\ccg \ComplexityFont{L}$. More specifically, the parameter defined by $\lambda(G) = $ minimal $k$ such that $\{G\} \in \ccg[k]\P$ does not characterize $\ccg \P$ simply because the union of all graph classes in $\ccg[1]\P$ already contains all graphs(the analogon of Lemma \ref{lem:degn} fails).   

\subsection*{Directed Acyclic Graph Characterization}
The semantics of a logical labeling scheme given by a quantifier-free formula $\varphi \in \FO_{2k}(\ltp)$ can be alternatively characterized by directed acyclic graphs (DAGs). Intuitively, an edge  in the DAG corresponds to an atomic formula using `$<$'. The atomic formulas involving equality can be modeled by grouping variables together. This means the DAG has not the variables of $\varphi$ as vertex set but rather a partition of these variables.  

\begin{definition}
    Let $k \in \N$. We call a DAG $D=(X,E_D)$ a $k$-DAG if its vertex set $X$ partitions $[2k]$. A $k$-labeling of a vertex set $V$ is a function $\ell \colon V \rightarrow \N^k$. A $k$-DAG $D$ and a $k$-labeling $\ell$ of a vertex set $V$ define the graph $G_D^\ell$ on vertex set $V$ with the following edges. For $u,v \in V$ let $(\ell(u),\ell(v)) = (x_1,\dots,x_{2k})$. There is an edge $(u,v)$ in $G_D^\ell$ if the following two conditions are satisfied:
    \begin{enumerate}
        \item For all $i,j \in [2k]$ it holds that $x_i = x_j$ whenever $i,j$ are in the same part of $X$,
        \item For all edges $(A,B) \in E_D$ it holds that $x_i < x_j$ for all $i \in A$ and $j \in B$.
    \end{enumerate} 
    \label{def:kdag}
\end{definition}

\begin{definition}
    A graph $G=(V,E)$ is $k$-expressible for a $k \in \N$ if there exists a finite sequence of $k$-DAGs $D_1,\dots,D_r$ and a $k$-labeling $\ell$ of $V$ such that 
    $G$ is the edge-union of $G_{D_1}^\ell,\dots,G_{D_r}^\ell$.
\end{definition}

\begin{theorem}
    For a graph $G$ and $k \in \N$ it holds that $\lambda_{\FO_{\mathrm{qf}}(\ltp)}(G) = k$ iff $k$ is the minimal number such that $G$ is $k$-expressible.
\end{theorem}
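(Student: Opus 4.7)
The plan is to prove the stronger equivalence that, for every $k\in\N$, the graph $G$ is $k$-expressible if and only if $\{G\}\in\ccg[k,k]\FO_{\mathrm{qf}}(\ltp)$; the claim about minima then follows immediately. I would prove the two directions by a direct syntactic translation between $k$-DAGs and quantifier-free formulas in $\FO_{2k}(\ltp)$, keeping the labeling $\ell$ itself fixed throughout. Thanks to Lemma~\ref{lem:ckkk} and the remark following it, one may treat a $(c,k)$-labeling and a $k$-labeling $\ell\colon V\to\N^k$ interchangeably, so the polynomial bound in the definition of $\ccg[c,k]\FO_{\mathrm{qf}}(\ltp)$ can be ignored when matching up the two formalisms.

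For the direction from $k$-expressibility to definability, suppose $G$ is the edge-union of the graphs $G_{D_1}^\ell,\dots,G_{D_r}^\ell$ induced by $k$-DAGs $D_i=(X_i,E_{D_i})$ and a common $k$-labeling $\ell$ of $V(G)$. For each $i$, set
\[
\varphi_i \;=\; \bigwedge_{A\in X_i}\bigwedge_{j,j'\in A} x_j = x_{j'} \;\wedge\; \bigwedge_{(A,B)\in E_{D_i}}\bigwedge_{j\in A,\;j'\in B} x_j < x_{j'}.
\]
A direct unwinding of Definition~\ref{def:kdag} shows $G_{D_i}^\ell = G_{\varphi_i}^\ell$, and consequently $G=G_\varphi^\ell$ for $\varphi=\bigvee_{i=1}^r\varphi_i$, which is a quantifier-free formula in $\FO_{2k}(\ltp)$; this witnesses $\{G\}\in\ccg[k,k]\FO_{\mathrm{qf}}(\ltp)$.

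For the other direction, suppose $G=G_\varphi^\ell$ for some quantifier-free $\varphi\in\FO_{2k}(\ltp)$ and $k$-labeling $\ell$. Consider the set of \emph{complete order types} on the $2k$ free variables, where each such type specifies, for every ordered pair $(j,j')\in[2k]^2$, exactly one of $x_j<x_{j'}$, $x_j=x_{j'}$, or $x_j>x_{j'}$, in a way consistent with a linear preorder. Every assignment in $\N^{2k}$ realizes exactly one complete type, and the truth value of any quantifier-free $\FO(\ltp)$-formula depends only on the type of its input. Hence $\varphi$ is logically equivalent to the disjunction $T_1\vee\cdots\vee T_r$ of those complete types that imply it. Each $T_i$ is determined by a partition of $[2k]$ into equality classes together with a linear order on these classes, and this data defines a $k$-DAG $D_i$ satisfying $G_{T_i}^\ell=G_{D_i}^\ell$. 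Since $G_\varphi^\ell$ equals the edge-union of the $G_{T_i}^\ell$'s under the same labeling $\ell$, the graph $G$ is $k$-expressible.

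The main obstacle is carefully justifying the type-completeness step in the backward direction: because $\varphi$ is quantifier-free and its only predicates are $<$ and $=$, one needs that every satisfying assignment is captured exactly by one complete order type, making the disjunction over the finitely many implying types equivalent to $\varphi$ rather than merely a sound under-approximation. Once this is established, both translations preserve the variable count $2k$ and the labeling $\ell$ unchanged, so the minimum $k$ witnessing $\{G\}\in\ccg[k,k]\FO_{\mathrm{qf}}(\ltp)$ coincides with the minimum $k$ for which $G$ is $k$-expressible, which is precisely the asserted equivalence.
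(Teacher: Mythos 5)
Your proposal is correct and rests on the same underlying correspondence as the paper's proof: decompose a quantifier-free formula in $\FO_{2k}(\ltp)$ into a disjunction of conjunctive order descriptions, observe that under a fixed labeling $\ell$ disjunction corresponds to edge-union of the induced graphs, and translate each conjunctive disjunct into a $k$-DAG (partition from the equalities, directed edges from the strict inequalities) and back. The one place where you genuinely deviate is the backward direction: the paper first passes to negation normal form (rewriting $\neg\, x = y$ as $x < y \vee y < x$ and $\neg\, x < y$ as $y < x \vee x = y$), then to disjunctive normal form, discards unsatisfiable clauses, and extracts the partition as the connected components of each clause's equality graph; you instead expand $\varphi$ into the disjunction of the complete order types consistent with it. Your variant is somewhat cleaner, since each complete type is automatically satisfiable and already comes packaged as a partition of $[2k]$ together with a linear order on the parts, so the connected-component construction and the acyclicity check become trivial; the cost is the type-completeness step you flag, which is indeed easy to discharge because every atom is of the form $x_i < x_j$ or $x_i = x_j$, hence the truth value of a quantifier-free $\FO(\ltp)$ formula depends only on the order type of the assignment, and every linear preorder on $2k$ variables is realized in $\N$. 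Both routes keep $k$ and the labeling $\ell$ unchanged in both directions, so the two minima coincide and the theorem follows.
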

\begin{proof}
    We show that there is a one-to-one correspondence between the semantics of a quantifier-free formula $\varphi \in \FO_{2k}(\ltp)$ and $k$-DAGs. We can assume that $\varphi$ contains no negation. To see that this can be done without loss of generality let $\varphi$ be in negation normal form. Then $\neg x = y$ can be replaced by $x < y \vee y < x$ and $\neg x < y$ by $y < x \vee x = y$. Next, we assume that $\varphi$ is in disjunctive normal form, i.e.~$\varphi = C_1 \vee \dots \vee C_p$ where $C_i$ consists of atomic formulas linked by conjunction. Given a $(c,k)$-labeling $\ell$ for a vertex set $V$ the formula $\varphi$ induces the graph $G_S^\ell$ with $S=(\varphi,k)$ as described in Definition \ref{def:lls}. Due to the observation given after the proof of Lemma \ref{lem:ckkk} it is okay to consider a less restrictive $k$-labeling $\ell \colon V \rightarrow \N^k$ instead and additionally we write $G^\ell_\varphi$ instead of $G_S^\ell$. Since every clause $C_i$ is a formula as well it can be seen as logical labeling scheme, which induces the graph $G_{C_i}^\ell$. Then the correspondence between the graphs induced by $\varphi$ and its clauses $C_1,\dots,C_p$ is that $G_\varphi^\ell$ is the edge-union of $G_{C_1}^\ell,\dots, G_{C_p}^\ell$. If a clause is unsatisfiable then its induced graph is the empty graph and thus removing this clause does not affect $G_S^\ell$. Therefore we assume that every clause is satisfiable.
    
    We now argue how to convert a clause $C$ from $\varphi$ into a $k$-DAG $D=(X,E_D)$ such that $G_{C}^\ell = G_{D}^\ell$ for every $k$-labeling $\ell$. Consider the undirected graph $H$ which has the variables of $\varphi$ as vertices and two vertices $x_i,x_j$ are adjacent     
    iff the clause $C$ contains $x_i = x_j$ or $x_j = x_i$. It follows that the connected components of $H$ partition the variables of $\varphi$; let $X$ be this partition. Now, consider the directed graph $F$ which has the variables of $\varphi$ as vertices again and there is an edge $(x_i,x_j)$ in $F$ iff $C$ contains the atomic formula $x_i < x_j$. Since we can assume $C$ to be satisfiable it follows that for every part $A$ in the partition $X$ ($A$ is a subset of the variables of $\varphi$) the induced subgraph of $F$ on the vertex set $A$ yields the independent graph. Assume the opposite, then there exist two variables $x_i,x_j$ in the same part of $X$ such that $(x_i,x_j)$ is an edge in $F$. This means that $C$ contains the atomic formulas $x_i = x_j$ and $x_i < x_j$, which contradicts satisfiability of $C$. Let us define the operation of merging a set of vertices $S$ in a graph $G$ such that the resulting graph $G'$ is the same as $G$ except that all vertices in $S$ are replaced by a single vertex $v_s$ and there is an edge $(u,v_s)$ in $G'$ if there is a vertex $v \in S$ such that $(u,v)$ is an edge in the old graph $G$; analogously for edges $(v_s,u)$. Now, let $F'$ be the graph obtained from $F$ by merging each part of $X$. Then there is a natural one-to-one correspondence between the partition $X$ and the vertex set of $F'$. We define $D$ to have the same edges as $F'$ via this correspondence. It remains to check that for this construction $G_{C}^\ell = G_{D}^\ell$ holds indeed.     
    To prove the other direction a $k$-DAG can be converted into a conjunctive clause in a similar way.     
\end{proof}
We conclude with the following two observations. By adding edge weights $w \colon E \rightarrow \N$ to the $k$-DAGs and adjusting the second condition of Definition \ref{def:kdag} such that for all edges $(A,B) \in E_D$ it holds that $x_j - x_i \geq w(x_i,x_j)$ for all $x_i \in A, x_j \in B$ the semantics of existential quantifiers can be mimicked. Besides, given two $k$-DAGs $D_1$ and $D_2$ with identical vertex sets $V(D_1)=V(D_2)$ it holds that $G_{D_1}^\ell = G_{D_2}^\ell$ for every $k$-labeling $\ell$ whenever the transitive closures of $D_1$ and $D_2$ coincide.
\section{Conclusions and Future Research} 
We have seen that limiting the computational resources for label decoders does indeed affect the class of graph classes that can be represented. Unfortunately, for a specific graph class the diagonalization argument from the second section does not help us determine whether it lies in $\ccg \P$. However, as of now it is not even clear whether any candidate of the IGC admits a labeling scheme at all as mentioned at the end of the first section. Therefore trying to place any of these classes in $\ccg \P$ seems elusive. On the other side, proving lower bounds against $\ccg \P$ or $\ccg \ComplexityFont{L}$ for small, hereditary graph classes might be just as futile given the lack of a suitable reduction notion. To counter this grim situation we have introduced a logical framework
in the previous section that is much more restrictive than the TM model in its quantifier-free variant but still expressive enough to capture many of the implicit representations that we know. It appears to be a realistic goal to prove impossibility results in this setting, or more concretely refute the following weaker version of the IGC: 
\begin{conjecture}[Weak IGC]
    Every small, hereditary graph class is in $\ccg \FO_{\mathrm{qf}}$.
\end{conjecture}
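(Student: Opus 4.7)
The statement claims that every small, hereditary graph class admits a quantifier-free first-order labeling scheme with arithmetic $\{\ltp,\addp,\mulp\}$. Since the paper explicitly positions this conjecture as a target for refutation, the plan is to first pursue the natural direct attack, identify precisely where it breaks, and then recast the obstruction as a refutation strategy. The direct attempt is to dispatch the known IGC candidates one by one. For line segment graphs, segment crossing is a Boolean combination of sign conditions on $2\times 2$ determinants of label-encoded endpoints and is therefore quantifier-free in $\FO(\ltp,\addp,\mulp)$. For (unit) disk graphs, adjacency reduces to the polynomial inequality $(x_1-y_1)^2 + (x_2-y_2)^2 \leq (r_u+r_v)^2$; for $k$-dot product and $k$-sphere graphs analogous explicit polynomial predicates exist. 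Assuming these are the only obstructions to the original IGC, exhibiting such formulas for each would establish the Weak IGC on those candidates.

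To handle an arbitrary small hereditary class $\mathcal{C}$ uniformly, the second step is more abstract. For each $n$ choose a polynomial-size universal graph $U_n$ on $N = \mathrm{poly}(n)$ vertices for $\mathcal{C}_n$, and seek a single quantifier-free $\varphi(x_1,\dots,x_k,y_1,\dots,y_k) \in \FO_{2k}$ together with a $(c,k)$-labeling $\ell$ of $V(U_n)$ such that $\varphi$ carves out exactly $E(U_n)$. Because $\varphi$ is fixed while $n$ varies, the burden is to interpret the family $(U_n)_n$ as the intersection pattern of a single uniformly described system of polynomial (in)equalities. By Definition~\ref{def:lls} this amounts to an arithmetic cell-decomposition of the label space whose sign pattern agrees with edges of the universal graphs.

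The hard part is precisely this uniformity, and I expect any proof attempt to stall here. Two obstacles stand out. First, the introduction already notes that natural geometric coordinates for line-segment and disk graphs can require exponentially many bits; the $\mathcal{O}(\log n)$ budget enforced by a $(c,k)$-labeling, which is not rescued by Lemma~\ref{lem:ckkk} once multiplication enters the language, demands a small-coordinate representation that is itself an open problem underlying the IGC. Second, a counting estimate in the spirit of Lemma~\ref{lem:degn} suggests the conjecture is in fact false: the number of semantically distinct quantifier-free $\FO_{2k}$-formulas over $\{\ltp,\addp,\mulp\}$ whose atoms have bounded degree and bounded coefficient size is only singly exponential in a function of $k$, whereas there are $\exp(\Omega(N^2))$ candidate universal graphs on $N$ vertices. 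If one fixes $k$, no single formula can describe a class whose $n$th slice packs super-polynomially many non-isomorphic graphs beyond what this count allows. Consequently, I would pivot and attempt a refutation by adapting the diagonalization of Section~2: enumerate the pairs $(\varphi,k)$ with $\varphi \in \FO_{\mathrm{qf}}$, build the corresponding diagonalization class $\mathcal{C}_{\FO_{\mathrm{qf}}}$, and show that its hereditary closure remains small by exploiting the limited expressiveness of quantifier-free arithmetic, which is exactly the strategy Fact~\ref{fact:gpr} proposes against $\ccg\P$ but which has a genuine chance of succeeding here because the expressiveness barrier is no longer a Turing machine but a fixed polynomial predicate.
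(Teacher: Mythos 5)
You are being asked to evaluate a proof of a statement that the paper never proves: the Weak IGC is stated as an open conjecture, explicitly offered as a target for \emph{refutation}, so there is no proof in the paper to compare against. Your proposal, to its credit, does not pretend to close the question --- it surveys attack routes and then pivots to a refutation strategy --- but as a proof of the statement it establishes nothing. The candidate-by-candidate route founders exactly where you say it does: writing segment crossing or disk intersection as a quantifier-free arithmetic predicate is easy, but Definition~\ref{def:lls} forces all label entries into $[n^c]$, and the coordinate blow-up for segment and disk graphs means no such bounded-magnitude representation is known. The universal-graph route is likewise left at the statement of the difficulty. So the proposal contains no proof, only an (accurate) diagnosis of why none is currently available.

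One concrete error is worth flagging. Your counting heuristic --- singly-exponentially many semantically distinct quantifier-free formulas for fixed $k$ versus $\exp(\Omega(N^2))$ graphs on $N$ vertices --- does not ``suggest the conjecture is false.'' The formula is fixed but the labeling is not: a scheme $(\varphi,c)$ with labels in $[n^c]^k$ admits up to $(n^{ck})^{n}=\exp(ckn\log n)$ labelings of an $n$-vertex set, hence can span up to $\exp(\mathcal{O}(n\log n))$ graphs on $n$ vertices, which is precisely the smallness bound. The bookkeeping is therefore consistent with every small class being representable by a single formula; no contradiction follows from counting formulas alone (this is the same reason the analogue of Lemma~\ref{lem:degn} fails for $\ccg[1]\P$, as the paper itself remarks). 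Your diagonalization pivot is the right instinct and mirrors Fact~\ref{fact:gpr}, but the step that would actually refute the conjecture --- showing that the hereditary closure of the diagonalization class for $\ccg\FO_{\mathrm{qf}}$ is small --- is exactly the hard part, and it is left undone. Note also that with $\addp$ and $\mulp$ in the signature there are infinitely many inequivalent quantifier-free formulas per arity (terms nest arbitrarily deep), so the finiteness argument that powers Lemma~\ref{lem:degn} for $\FO_{\mathrm{qf}}(\ltp)$ does not transfer to $\FO_{\mathrm{qf}}$.
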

As a first step in this direction we have investigated the fragment $\ccg \FO_{\mathrm{qf}}(\ltp)$ and made some structural observations. With the concept of parameter characterizations we have shown that the question of whether a certain graph class lies in $\ccg \FO_{\mathrm{qf}}(\ltp)$ can be answered by considering the $k$-expressibility property of every graph in this class independently. The directed acyclic graph characterization gives an alternative view on $\ccg \FO_{\mathrm{qf}}(\ltp)$,  which is independent of the logical formalism. This could be a useful tool for proving lower bounds against this class. But even this small fragment seems to be surprisingly expressive as the following task shows. Give an example of a family of graphs that is not bounded by $\lambda_{\FO_{\mathrm{qf}}(\ltp)}$. Recall that for the interval number this was quite simple, see Figure~\ref{fig:family}. Another interesting question is whether adding quantifiers enhances the expressiveness, i.e.~$\ccg \FO_{\mathrm{qf}}(\ltp) = \ccg \FO(\ltp)$?

\subparagraph*{Acknowledgments}
We thank the anonymous reviewers for their helpful comments on earlier drafts of this paper.


\end{document}